\newtheorem{lemma}{Lemma}
\newtheorem{theorem}{Theorem}
\newtheorem{result}{Result}
\theoremstyle{remark}
\newtheorem{remark}{Remark}
\newcommand{\eps}{\epsilon}
\newcommand{\bitem}{\begin{itemize}}
\newcommand{\eitem}{\end{itemize}}
\newcommand{\beqn}{\begin{equation}}
\newcommand{\eeqn}{\end{equation}}
\newcommand{\balign}{\begin{align}}
\newcommand{\ealign}{\end{align}}
\newcommand{\inner}[1]{\left<#1\right>}
\renewcommand\paragraph{\@startsection{paragraph}{4}{\z@}%
            {-2.5ex\@plus -1ex \@minus -.25ex}%
            {1.25ex \@plus .25ex}%
            {\normalfont\normalsize\bfseries}}
\begin{document}
\title{An Approximate Message Passing Framework for Side Information\thanks{
AM and DN were supported by NSF CAREER $\#1348721$, and NSF BIGDATA $\#1740325$;
DB and YZ were supported by NSF EECS $\#1611112$.
A subset of our results appeared in Baron et al.~\cite{BNMRW2017}.
}}
\renewcommand\footnotemark{}
\author[1]{Anna~Ma}
\author[2]{You~(Joe)~Zhou}
\author[3]{Cynthia~Rush}
\author[2]{Dror~Baron}
\author[4]{Deanna~Needell}

\affil[1]{Department of Mathematics, University of California San Diego}
\affil[2]{Department of Electrical and Computer Engineering, NC State University}
\affil[3]{Department of Statistics, Columbia University}
\affil[4]{Department of Mathematics, University of California Los Angeles}

\maketitle

\begin{abstract}
Approximate message passing (AMP) methods have gained recent traction in sparse signal recovery. Additional information about the signal, or \emph{side information} (SI), is commonly available and can aid in efficient signal recovery. This work presents an AMP-based framework that exploits SI and can be readily implemented in various settings for which the SI results in separable distributions. To illustrate the simplicity and applicability of our approach, this framework is applied to a Bernoulli-Gaussian (BG) model and a time-varying birth-death-drift (BDD) signal model, motivated by applications in channel estimation.
We develop a suite of algorithms, called AMP-SI, and derive denoisers for the
BDD and BG models. Numerical evidence demonstrating the advantages of our approach are presented alongside empirical evidence of the accuracy of a proposed state evolution.
\end{abstract}

\section{Introduction}

The core focus of research in many disciplines, including but not limited to
communication~\cite{Caire2004},
compressive imaging~\cite{Arguello2011},
matrix completion~\cite{Candes2009},
quantizer design~\cite{Lloyd82},
large-scale signal recovery~\cite{ZhuBaronMPAMP2016ArXiv}, and
sparse signal processing~\cite{CandesUES}, is on accurately recovering a high-dimensional, unknown signal from a limited number of noisy linear measurements by exploiting
probabilistic characteristics and structure in the signal.

We consider the following model for this task. For an unknown signal $x \in \mathbb{R}^N$,
\begin{equation}
y= Ax + z,
\label{eq:hdreg}
\end{equation}
where $y \in \mathbb{R}^M$ are
noisy measurements,
$A \in \mathbb{R}^{M \times N}$ is the measurement
matrix, and $z \in \mathbb{R}^M$ is measurement noise.
The objective of signal recovery is to recover
or estimate $x$ from knowledge of only $y$ and $A$, and in some cases statistical
knowledge about $x$ and $z$.
A great deal of effort has gone into developing schemes for such signal recovery, for example $\ell_1$ minimization based approaches for sparse recovery~\cite{DonohoBP,LASSO1996} and computationally efficient iterative algorithms~\cite{DonohMM_Message,RanganADMMGAMP2015_ISIT, Boyd2011},
and supporting theory to tackle these challenges as
datasets become larger and multidimensional.
For scenarios in which the signal's prior distribution is available, the approximate message passing framework is often utilized.

\subsection{AMP for Signal Recovery}
Approximate message passing or AMP~\cite{DonohMM_Message, Krzakala2012probabilistic, RanganGAMP2010}
is a low-complexity algorithmic framework for efficiently recovering sparse signals in high-dimensional regression tasks (\ref{eq:hdreg}).
AMP algorithms are derived as Gaussian or quadratic approximations of loopy belief propagation algorithms (e.g., min-sum, sum-product)
on the dense factor graph corresponding to~\eqref{eq:hdreg}.

AMP has a few features that make it attractive for signal recovery.
In certain problem settings, AMP offers convergence in linear time, and
its performance can be tracked accurately with a simple scalar iteration known as state evolution (SE), discussed below.
In addition, it is well-accepted that the performance of AMP will be no worse than the best polynomial-time algorithms available \cite{maleki2010approximate}.

\textbf{AMP algorithm:}
The standard AMP algorithm \cite{DonohMM_Message} iteratively updates estimates of the unknown
input signal, with $x^t \in \mathbb{R}^N$ being the estimate at iteration $t$.  The algorithm is given by the following set of updates. Assume that $x^0$ is the all-zero vector and update for $t\geq 0$ with the following iterations:
\begin{align}
&r^t = y - Ax^t + \frac{r^{t-1}}{\delta} \inner{\eta_{t-1}^\prime( x^{t-1} + A^Tr^{t-1})}, \label{eq:AMP1}\\
&x^{t+1} = \eta_t( x^t + A^Tr^t). \label{eq:AMP2}
\end{align}
Note that $\eta_t\colon \mathbb{R} \rightarrow \mathbb{R}$ is an appropriately-chosen sequence of functions and $\delta = \frac{M}{N}$ is the measurement rate.
The functions $\{\eta_t(\cdot)\}_{t \geq 0}$ act element-wise on
their vector inputs and have derivatives
$\eta_t^\prime(w) = \frac{\partial}{\partial w}\eta_t(w)$.  Moreover, $\inner{w} = \frac{1}{N} \sum_{i=1}^N w_i$
is the empirical mean, where $w \in \mathbb{R}^N$.
Here and throughout, we use capital letters to represent {\em random variables} (RVs) and lower case letters to represent realizations.
We also denote a Gaussian RV
with mean $\mu$ and variance $\sigma^2$ by $\mathcal{N}(\mu, \sigma^2)$.

Assume that the measurement matrix $A$ has independent and identically distributed (i.i.d.) $\mathcal{N}(0, 1/M)$ entries and the entries of the signal $x$ are i.i.d.\ $\sim f(X)$, where $f(X)$ is the probability density function (pdf) of the signal.  Under these assumptions, one useful feature of AMP is that the input to the denoiser, $x^t + A^Tr^t$,
which we refer to as the \emph{pseudo-data}, is almost surely equal in distribution, in the large system limit as $N \rightarrow \infty$
with fixed $\delta$, to the true signal $x$ plus i.i.d.\ Gaussian noise with variance $\lambda_t^2$, where $\lambda_t^2$ is
a constant value given by the SE equations, introduced in \eqref{eq:AMP_SE} below.
These favorable statistical properties of the pseudo-data are due to the presence of the `Onsager' term,
$ \frac{r^{t-1}}{\delta} \inner{\eta_{t-1}^\prime(x^{t-1} + A^Tr^{t-1})}$,
used in the residual step \eqref{eq:AMP1} of the AMP updates.

\textbf{State evolution (SE):} One of AMP's attractive features is that under suitable conditions on $A$ and $x$, its performance can be tracked accurately with a simple scalar
iteration referred to as state evolution (SE)~\cite{BayMont11,RushV16}. In particular, performance measures such as the
$\ell_2$-error or the $\ell_1$-error in the algorithm's iterations concentrate to constants predicted by
SE.  Let the noise $z$ of \eqref{eq:hdreg} be element-wise i.i.d. $\sim f(z)$ and for $Z \sim f(Z)$ let $\sigma_z^2 = \mathbb{E}[Z^2]$.  Then the SE equations follow:
let $\lambda_0 = \sigma_z^2 + \mathbb{E}[X^2]/\delta$ and for $t \geq 0$,
\begin{equation}
\lambda_t^2 = \sigma_z^2 + \frac{1}{\delta}\mathbb{E}\left[\left(\eta_{t-1}(X +
\lambda_{t-1}U) - X\right)^2\right],
\label{eq:AMP_SE}
\end{equation}
where $X \sim f(X)$ is independent of  $U \sim \mathcal{N}(0,1)$ and $\lambda_t^2$ tracks the variance of the difference between the pseudo-data and signal at iteration $t$.

The AMP updates \eqref{eq:AMP1} - \eqref{eq:AMP2} rely on appropriately-chosen {\em denoisers} $\{ \eta_t\}_{t \geq 0}$, which reduce the noise in the optimization task at each iteration. 
Owing to the favorable properties of the psuedo-data and the fact that one is often interested in evaluating the performance of the algorithm using the mean squared error (MSE), $\eta_t$ in iteration $t$ is often chosen to be
the minimum mean squared error (MMSE) denoiser
based on the pdf of $x$:
\begin{equation}
\eta_t(a)=\mathbb{E}[X | X + \lambda_t U = a],
\label{eq:cond_dist_denoiser}
\end{equation}
where $U \sim \mathcal{N}(0,1)$, and $X \sim f(X)$ is a RV
with the same pdf as that of $x$. See Section \ref{sec:SEexplain} for further insights of how SE behaves in our framework.

\subsection{Side information} \label{sec:si-eg}
In information theory~\cite{Cover06,Gallager68},
it is well known that when different communication systems share {\em side information} (SI), overall communication can happen more efficiently.
As an example, when running a Bayesian signal recovery algorithm on an input $x$ with an unknown probability density,
feedback about the estimated density
leads to improved signal recovery quality~\cite{Kamilov2012}.

Signal recovery algorithms often have access to SI,
denoted $\widetilde{x}$, that,
as we will soon see, offers the potential to
markedly improve recovery quality. For the noisy linear model of~\eqref{eq:hdreg}, SI has been shown to aid signal recovery when considering various application settings~\cite{wang2015approximate,SaabM_weighted,NeedellWeighted16,VanLuong2017, Renna2016, Herzat2013, Vaswani2010, Weizman2015, Chen2008, Mota2016, Mota2017}.
For example, three dimensional (3D)
video acquisition could be performed by acquiring each frame of video,
which is a 2D image, independently of other frames
using a single pixel camera~\cite{takhar2006new}.
While recovering the current frame, it is likely that
one is simultaneously recovering
the previous and next frames, which
can be used as SI.

We will demonstrate that our approach is potentially useful in applications
by studying a channel estimation problem in wireless communication systems
(Fig.~\ref{fig:batch}).
In typical channel estimation scenarios, a wireless device transmits a pilot sequence and data payload in {\em batches}.
In batch~$b$, the pilot sequence $p$ is transmitted into the channel, where it is convolved
with the channel response $x^b$, yielding noisy linear measurements
(details in Section~\ref{subsec:channel_est}).
Not only is the channel response $x^b$ in batch~$b$ sparse,
the slowly time varying nature of the channel ensures that its
differences relative to channel responses in previous batches are
structured. Therefore, we can use $\widetilde{x} = \widehat{x}^{b-1}$, the channel
response estimated in the previous batch, as SI while estimating $x^b$ in
the current batch. In Section~\ref{sec:sims}, we demonstrate that SI in the above-mentioned batched manner
helps AMP achieve lower MSE for a model motivated by channel estimation.

\begin{figure}[t!]
\centering
\includegraphics[width=.4\textwidth]{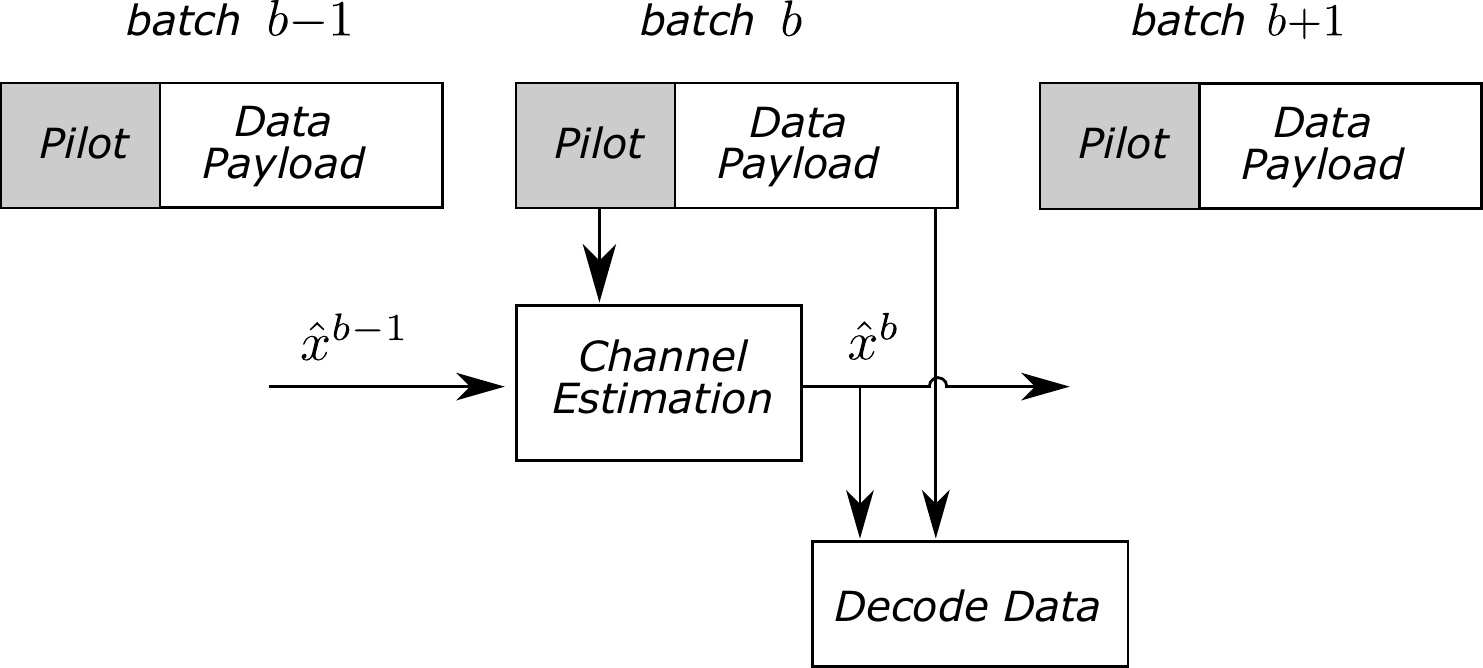}
\caption{In batch~$b$, the wireless device transmits a pilot and data payload.
The channel filter $x^b$ is estimated using the channel's response to
the pilot along with SI $\widetilde{x} = \widehat{x}^{b-1}$, the channel filter estimated in the previous batch. The estimated $\widehat{x}^b$ is used to decode the data and as SI in the next batch to estimate $x^{b+1}$.
}\label{fig:batch}
\vspace*{-4.5mm}
\end{figure}

\subsection{Contributions and Organization}
In this work we develop a class of
sparse signal recovery
algorithms that integrate SI into AMP. 
Our main contribution is a
framework that incorporates SI in the denoiser of AMP that is Bayes-optimal in certain cases and
can be adapted to
arbitrary dependencies between the signal and the SI.
Moreover, our framework's conceptual simplicity allows
us to extend existing SE results to AMP-SI as in \eqref{eq:AMP_SE};
these SE results for signal recovery with
SI are lacking in prior work~\cite{wang2015approximate},~\cite{DCSAMP}.
In the case where the SI is a Gaussian-noise corrupted view of the true signal, we rigorously show Bayes-optimality properties for AMP-SI.
For more general cases, we demonstrate empirically that our proposed SE formulation
tracks the AMP performance.

We demonstrate our framework through its application to two types of signals. First, a Bernoulli-Gaussian (BG) signal and second, motivated by the channel estimation problem discussed in Section \ref{sec:si-eg}, a time-varying birth-death-drift (BDD) signal. Our numerical experiments show that
our proposed framework achieves a lower MSE than other previously-studied SI methods.

The remainder of the paper is organized as follows.
In Section \ref{sec:amp-si}, we discuss the AMP algorithm and prior work in AMP approaches that utilize SI. We then present our AMP framework for SI. Next we discuss the BG model in Section \ref{sec:BG}, which is a simplified version of the BDD model studied in Section~\ref{sec:BDD}. In Section~\ref{sec:sims}, we include numerical simulations demonstrating the good performance of AMP-SI. Section~\ref{sec:futurework} concludes.

\section{AMP with Side Information}
\label{sec:amp-si}

\subsection{Prior Work}
While integrating SI (or prior information) into signal recovery algorithms is not new~\cite{wang2015approximate, VanLuong2017, Renna2016, Mota2017, Chen2016,manoel2017streaming}, 
our work is a unified framework within AMP
that supports arbitrary dependencies between the $(X_n,\widetilde{X}_n)_{n=1}^N$ pairs.
Prior work using SI has been either heuristic, limited 
to specific applications, or outside the AMP framework. For example, Wang and Liang~\cite{wang2015approximate} proposed the Generalized Elastic Net Prior approach, which integrates SI into AMP for a specific signal prior density, but the method lacks Bayes optimality properties and is difficult to apply to other signal models.  Our algorithmic framework overcomes these limitations through a generalized, Bayes optimal framework.

Ziniel and Schniter~\cite{DCSAMP} developed an AMP-based signal recovery algorithm, namely DCS-AMP, for a
time-varying signal model based on Markov processes for the support and amplitude.
The Markov processes and corresponding dependencies between variables are captured by factor graph models.
While our BDD model (details in Section~\ref{sec:BDD}) is closely related to their
time-varying signal model, our emphasis is to introduce the AMP-SI framework
and demonstrate how SI can be incorporated in AMP without needing to carefully craft
factor graphs for every new signal model.

Manoel et al. implemented an AMP-based algorithm called MINI-AMP in which
the input signal is repeatedly reconstructed in a streaming fashion,
and information from past reconstruction attempts is aggregated into a prior,
thus improving ongoing reconstruction results~\cite{manoel2017streaming}.
Interestingly, the signal recovery approach of MINI-AMP resembles that of AMP-SI,
in particular when the BG model of Section \ref{sec:BG} is used.
Finally, Manoel et al. proved that MINI-AMP is MMSE-optimal~\cite{manoel2017streaming}.

\subsection{Our Approach: AMP-SI}
In this paper we introduce an algorithmic framework that utilizes available SI.
Our SI takes the form of an estimate $\widetilde{x} \in \mathbb{R}^N$, which is statistically dependent on the signal $x$ through some joint pdf $f(X, \widetilde{X})$.
We propose a \emph{conditional denoiser},
\begin{equation}
\label{eq:eta_CAMP}
\eta_t(a, b)=\mathbb{E}[X | X + \lambda_t U = a,
\widetilde{X}=b],
\end{equation}
where $U\sim\mathcal{N}(0,1)$ is independent of $(X, \widetilde{X})\sim f(X, \widetilde{X})$. The denoiser provides an MMSE estimate of the signal while incorporating SI. We refer to our framework using the proposed denoiser \eqref{eq:eta_CAMP} within the standard AMP algorithm \eqref{eq:AMP1} - \eqref{eq:AMP2} as the AMP-SI method. The AMP-SI algorithm is the following.  Assume $x^0 $ is the all-zero vector and update for $t\geq 0$:
\begin{align}
&r^t = y - Ax^t + \frac{r^{t-1}}{\delta} \inner{\eta_{t-1}^\prime( x^{t-1} + A^Tr^{t-1}, \widetilde{x})},  \label{eq:CAMP1}\\
&x^{t+1} = \eta_t( x^t + A^Tr^t, \widetilde{x}).  \label{eq:CAMP2}
\end{align}
Note that $\eta_t(\cdot, \cdot)$ is the denoising function
proposed in \eqref{eq:eta_CAMP}, its derivative $\eta_t^\prime(w, \cdot) = \frac{\partial}{\partial w}\eta_t(w, \cdot)$ is with respect to the first input, and $\inner{w} = \frac{1}{N} \sum_{i=1}^N w_i$ for $w \in \mathbb{R}^N$. The $\lambda_t$ value in \eqref{eq:eta_CAMP} is given by SE equations for AMP-SI.  Again, let the noise $z$ be element-wise i.i.d. $\sim f(z)$ and for $Z \sim f(Z)$, let $\sigma_z^2 = \mathbb{E}[Z^2]$. Then, $\lambda_0 = \sigma_z^2 + \mathbb{E}[X^2]/\delta$ and for $t \geq 0$,
\begin{equation}
\lambda_t^2 = \sigma_z^2 + \frac{1}{\delta}\mathbb{E}\Big[\Big(\eta_{t-1}(X + \lambda_{t-1}U, \widetilde{X}) - X\Big)^2\Big],
\label{eq:SE2}
\end{equation}
where $(X, \widetilde{X}) \sim f(X, \widetilde{X})$ are independent of $U$, which is a standard Gaussian RV.
In comparison to standard AMP, the conditional denoiser function $\eta_t(\cdot, \cdot)$ uses SI to denoise the pseudo-data in AMP-SI.

We note that while there are rigorous theoretical results~\cite{BayMont11,RushV16} proving that for large $N$ the pseudo-data  is approximately equal (in distribution) to the true signal $x$ plus i.i.d.\ Gaussian noise with variance $\lambda_t^2$ in the case of standard AMP~\eqref{eq:AMP1} - \eqref{eq:AMP2} with the standard SE~\eqref{eq:AMP_SE}, we only \emph{conjecture} that such a result is true for AMP-SI~\eqref{eq:CAMP1} - \eqref{eq:CAMP2} with the corresponding SE~\eqref{eq:SE2}. This conjecture is supported by empirical evidence in Section~\ref{sec:sims} that shows that the SE accurately tracks the MSE of the AMP-SI estimates, and by a theoretical proof relating to the $\ell_2-$error (see Section ~\ref{sec:theory}).

To show that AMP-SI is conceptually intuitive to apply and can improve signal estimation quality in applications where SI is available,
we apply AMP-SI to a preliminary channel estimation model (Section~\ref{sec:BDD}). More complex models, like those that incorporate element-wise dependencies between signal and SI, not only require more complicated denoiser and SE derivations but also need to be handled carefully theoretically.
While using more realistic channel models is left for future work,
our encouraging numerical results show that AMP-SI can be used beyond toy models
such as BG (Section~\ref{sec:BG}).
\subsection{AMP-SI Theory} \label{sec:theory}

\subsubsection{State Evolution Analysis}\label{sec:SEexplain}

As mentioned previously, the performance of AMP~\eqref{eq:AMP1}-\eqref{eq:AMP2} at each step of the algorithm can be \emph{rigorously} characterized by the SE equations in~\eqref{eq:AMP_SE}.  When the empirical density function
of the unknown signal $x$ converges to some pdf $f(X)$ on $\mathbb{R}$ and the denoisers $\{\eta_t(\cdot)\}_{t \geq 0}$ used in the AMP updates are applied element-wise to their input, Bayati and Montanari~\cite{BayMont11} proved
that the SE accurately predicts AMP performance in the large system limit.  For example, their result implies that the MSE, $\frac{1}{N} ||x^t - x||^2$, equals $\delta(\lambda_t^2 - \sigma_z^2)$ almost surely in the large system limit, and additionally, it characterizes the limiting constant values for a fairly general class of loss functions. Rush and Venkataramanan~\cite{RushV16} provide a concentration version of the asymptotic result when the prior density of $x$ is i.i.d.\ sub-Gaussian, showing that the probability of $\epsilon$-deviation between various performance measures and their limiting constant values fall exponentially in $N$.

Considering AMP-SI, however, we cannot directly apply the theoretical results of Bayati and Montanari~\cite{BayMont11} or Rush and Venkataramanan~\cite{RushV16}.  Each entry $n$ of our signal is generated according to the conditional density $f(X_n|\widetilde{X}_n)$, where the conditioning is on the value of the corresponding entry of the SI, meaning the signal $x$ now has independent, but not identically distributed, entries.  Owing to $x$ no longer being i.i.d., the conditional denoiser \eqref{eq:eta_CAMP} depends on the index $n$, meaning that different scalar denoisers will be used at different indices, based on different SI at different indices.  Both results~\cite{BayMont11} and ~\cite{RushV16} require that the same denoiser function be applied to each element of the pseudo-data and our denoiser will change element-wise based on the SI.

Recent results~\cite{berthier2017} extend the asymptotic SE analysis to a larger class of possible denoisers, allowing, for example, each element of the input to use a different non-linearity as is the case in AMP-SI.  We employ these results to rigorously relate the SE presented in \eqref{eq:SE2} to the AMP algorithm in \eqref{eq:CAMP1} - \eqref{eq:CAMP2} when considering the $\ell_2-$error between the pseudo-data  and the true signal.  To do so, we make the following assumptions:
\textbf{(A1)} The measurement matrix $A$ has i.i.d.\ mean-zero, Gaussian entries having variance $1/M$.
\textbf{(A2)}  The noise $z$ is i.i.d.\ $\sim f(Z)$ with finite variance $\sigma_z^2$.
\textbf{(A3)} The signal and SI $(X, \widetilde{X})$ are sampled i.i.d.\ from the joint density $f(X, \widetilde{X})$.
\textbf{(A4)} 
For $t\geq 0$, the denoisers $\eta_t: \mathbb{R}^{2} \rightarrow \mathbb{R}$ defined in \eqref{eq:eta_CAMP} are Lipschitz in their first argument.
\textbf{(A5)} For  any $2 \times 2$ covariance matrix $\Sigma$, let $(Z_1, Z'_1) \sim \mathcal{N}(0,\Sigma)$  independent of $(X, \widetilde{X}) \sim f(X, \widetilde{X}).$  Then for any $s, t \geq 0$,
\[\mathbb{E} [X \eta_t(X + Z_1, \widetilde{X})] < \infty,\]
and
\[\mathbb{E} [\eta_t(X + Z_1, \widetilde{X}) \eta_s(X + Z'_1, \widetilde{X}) ] < \infty.\]

Under the above assumptions we have the following guarantee relating the SE to the $\ell_2-$error.
\begin{theorem}
\label{thm:SE}
 Under assumptions $\textbf{(A1)} - \textbf{(A5)}$,
\[\lim_{N \rightarrow \infty}  \frac{1}{N} ||x^t + A^T r^t- x||^2 \overset{p}{=}  \lambda_t^2,\]
where $x^t$ and $r^t$ are iterates of AMP-SI as shown in \eqref{eq:CAMP1}-\eqref{eq:CAMP2} and $\overset{p}{=}$ indicates convergence in probability.
\end{theorem}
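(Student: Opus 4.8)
The plan is to recast the AMP-SI recursion \eqref{eq:CAMP1}--\eqref{eq:CAMP2} as a standard AMP iteration driven by a \emph{non-separable} (more precisely, coordinatewise but index-dependent) denoiser, and then invoke the generalized state-evolution theory of Berthier et al.~\cite{berthier2017}. Concretely, I would fix the SI vector $\widetilde{x}$ and define $\eta_t^{(N)}\colon \mathbb{R}^N \to \mathbb{R}^N$ by $[\eta_t^{(N)}(w)]_n = \eta_t(w_n, \widetilde{x}_n)$. Because the SI enters only as a per-coordinate parameter, $\eta_t^{(N)}$ acts componentwise, but the scalar nonlinearity applied at coordinate $n$ depends on $n$ through $\widetilde{x}_n$; this is exactly the type of non-identical coordinatewise map that the framework of~\cite{berthier2017} admits, whereas the classical results~\cite{BayMont11, RushV16} do not. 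The Onsager correction $\inner{\eta_{t-1}'(\cdot, \widetilde{x})}$ in \eqref{eq:CAMP1} is the average of the first-argument derivatives, matching the divergence term that appears in the general theory.

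First I would verify the hypotheses of~\cite{berthier2017}. Assumption \textbf{(A1)} supplies the i.i.d.\ Gaussian measurement ensemble; \textbf{(A4)} gives the uniform Lipschitz property of $\eta_t^{(N)}$ in its first argument (the Lipschitz constant being inherited coordinatewise and hence uniform in $N$); and \textbf{(A2)}, \textbf{(A5)} supply the second-moment and integrability conditions needed for the pseudo-Lipschitz test functions. The essential structural input is a ``converging sequence'' condition: the empirical joint distribution of the disorder $(x_n, \widetilde{x}_n)_{n=1}^N$ must converge as $N \to \infty$. This is delivered by \textbf{(A3)}: since $(X_n, \widetilde{X}_n)$ are i.i.d.\ from $f(X, \widetilde{X})$, the empirical measure $\frac{1}{N}\sum_n \delta_{(x_n, \widetilde{x}_n)}$ converges almost surely, by the law of large numbers for pseudo-Lipschitz statistics, to the law $f(X, \widetilde{X})$.

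With the hypotheses in place, the generalized SE theorem yields that, for any pseudo-Lipschitz $\psi$ of order two, $\frac{1}{N}\sum_n \psi\big(x_n, (x^t + A^T r^t)_n\big)$ converges in probability to $\mathbb{E}[\psi(X, X + \lambda_t U)]$, where the effective-noise variance $\lambda_t^2$ is the output of the abstract state-evolution recursion. I would then take $\psi(a, b) = (b - a)^2$, which is pseudo-Lipschitz of order two and is controlled by \textbf{(A5)}, to conclude that $\frac{1}{N}\|x^t + A^T r^t - x\|^2 \overset{p}{=} \mathbb{E}[\lambda_t^2 U^2] = \lambda_t^2$ in the limit. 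The remaining task is to check that the abstract SE recursion, specialized to the coordinatewise conditional-MMSE denoiser \eqref{eq:eta_CAMP} and evaluated against the limiting law $f(X, \widetilde{X})$, collapses to the scalar recursion \eqref{eq:SE2}; here the convergence of the empirical joint measure from \textbf{(A3)} lets the per-coordinate expectations concentrate to the single population expectation $\mathbb{E}[(\eta_{t-1}(X + \lambda_{t-1} U, \widetilde{X}) - X)^2]$.

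The main obstacle I anticipate is the faithful embedding of AMP-SI into the non-separable framework, i.e., confirming that the index-dependent denoiser $\eta_t^{(N)}$ meets the precise uniform-regularity and convergence conditions of~\cite{berthier2017} as $N$ varies, rather than merely at a fixed $N$. In particular, one must argue that the sequence $\{\eta_t^{(N)}\}_N$, together with the disorder sequence $\{(x_n, \widetilde{x}_n)\}$, forms an admissible converging sequence in their sense, and that the cross-iteration (off-diagonal) covariances produced by the general theory reduce to the single diagonal parameter $\lambda_t^2$ appearing in \eqref{eq:SE2}; both reductions hinge on combining the i.i.d.\ assumption \textbf{(A3)} with the integrability bounds \textbf{(A5)}.
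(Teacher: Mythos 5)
Your proposal is correct and takes essentially the same route as the paper: the paper's proof is a one-line appeal to Theorem~14 of Berthier et al.~\cite{berthier2017}, noting that its conditions (C5) and (C6) follow from assumptions \textbf{(A2)} and \textbf{(A5)} together with the Law of Large Numbers. Your write-up simply makes explicit the steps the paper leaves implicit --- the embedding of the SI-parameterized, index-dependent denoiser into the non-separable framework, the verification of the Lipschitz and moment hypotheses from \textbf{(A3)}--\textbf{(A5)}, and the choice of test function $\psi(a,b)=(b-a)^2$ --- all of which are consistent with the paper's intended argument.
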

\begin{proof}
The proof uses \cite[Thm.\ 14]{berthier2017}.  We note that conditions $(C5)$ and $(C6)$ needed in \cite[Thm.\ 14]{berthier2017} follow from our assumptions $\textbf{(A2)}$ and $\textbf{(A5)}$ and the Law of Large Numbers.
\end{proof}
In Appendix~\ref{app:SEtheory} we show an example of how to verify the assumptions $\textbf{(A4)}$ and  $\textbf{(A5)}$ for a simple case where the signal is i.i.d.\ Gaussian and the SI is the signal plus i.i.d.\ Gaussian noise.

Ongoing work considers extensions of Theorem~\ref{thm:SE} to more general loss functions and weaker assumptions than those made in $\textbf{(A1)} - \textbf{(A5)}$.  We believe that by using the theory supporting non-separable denoisers provided in~\cite{berthier2017}, it is possible to extend our AMP-SI framework to handle arbitrary joint distributions between the signal and SI (with element-wise dependencies) and that it is possible to extend the framework to the vector AMP algorithm~\cite{RanganVAMP, VAMP_NIPS} allowing for a more general class of measurement matrices.

\subsubsection{Bayes Optimality}
When the conditional expectation denoiser~\eqref{eq:cond_dist_denoiser} is used in AMP~\eqref{eq:AMP1}-\eqref{eq:AMP2}, the corresponding SE~\eqref{eq:AMP_SE} in its convergent states coincides with Tanaka's fixed point equation~\cite{Tanaka2002,GuoVerdu2005}, ensuring that if AMP runs until it converges, in the large system limit the result provides the best possible MSE achieved by any algorithm under certain problem conditions. Tanaka's fixed point equation in the Gaussian case has been rigorously proven, see \cite{reeves2016replica,barbier2017mutual}.

In the case that the SI available to the system is a Gaussian-noise corrupted view of the true signal,
i.e., $\widetilde{X} = X + \mathcal{N}(0, \widehat{\sigma}^2)$,
it can be shown~\cite{BNMRW2017} that the fixed points of AMP-SI SE \eqref{eq:SE2} coincide with the fixed points of AMP SE \eqref{eq:AMP_SE} with `effective' measurement rate
$\delta_{eff} = \delta/\mu$ and
`effective' measurement noise variance $\sigma_{eff}^2 = \mu \sigma^2 $ where $0 \leq \mu \leq 1$ and the
$\mu$ depends on the prior density of the signal and the SI noise variance $ \widehat{\sigma}^2$. The
effective change in $\delta$ and $\sigma^2$
implies that the incorporation of Gaussian-noise corrupted SI via the AMP-SI algorithm gives us Bayes-optimal signal recovery for a standard (without SI) linear regression problem \eqref{eq:hdreg} with \emph{more} measurements and \emph{reduced} measurement noise variance than our own.  The details of this argument are provided in Appendix~\ref{app:asilomar} and first
appeared in~\cite{BNMRW2017}.
We believe AMP-SI has similar Bayes-optimality properties to standard AMP, however, proving this rigorously is theoretically difficult since the above analysis relies heavily on the Gaussianity of the SI noise, and thus may be difficult to generalize.

\section{Bernoulli-Gaussian Model} \label{sec:BG}
The BG model reflects the scenario in which one wants to recover a sparse signal and has access to SI in the form of the signal with additive white Gaussian noise (AWGN). In other words, at every iteration the algorithm has access to SI, $\widetilde{x},$ and pseudo-data, $v^t$, with
\begin{align*}
\widetilde{x} &= x + \mathcal{N}(0,\widehat{\sigma}^2 \mathbb{I}), \qquad v_t \approx x + \mathcal{N}(0, \lambda^2_t \mathbb{I}),
\end{align*}
where the additive noise in the SI and pseudo-data are independent. The entries of 
$x$ follow a BG pdf:
\begin{align}
X_n \sim
\epsilon \frac{1}{\sqrt{2\pi}} \exp\left(\frac{-x_n^2}{2}\right) + (1-\epsilon)\delta_0,
\label{eq:BG}
\end{align}
so that $x$ is zero with probability $1-\epsilon$ and is standard Gaussian in nonzero entries. Here, $\delta_0$ represents the Dirac delta function at 0. We start with this model because it provides a closed form derivation of the denoiser with an intuitive interpretation. The next section will show that even for this toy model, the derivation is not trivial.

\subsection{The Conditional Denoiser with SI for BG}
In this section we will derive the following result:

\begin{result} The AMP-SI denoiser \eqref{eq:eta_CAMP}
has the following closed form for the BG model:
 \begin{align}
 \eta( a,b ) &= \Big(1 + R_{(a,b)}\Big)^{-1} \left[ \frac{a \widehat{\sigma}^2 + b \lambda_t^2}{ \widehat{\sigma}^2 + \lambda_t^2 + \widehat{\sigma}^2 \lambda_t^2 } \right],
 \label{eq:denoiser_BG}
 \end{align}
 where $R_{(a,b)}$ is a ratio between probabilities (computed in \eqref{eq:S-bg}), $\widehat{\sigma}^2$ is the
variance of the AWGN of the SI, and $\lambda_t^2$ is the
variance of the AWGN of the pseudo-data at iteration $t$.
 \label{result1}
 \end{result}
Note that the denoiser given in~\eqref{eq:denoiser_BG} behaves as we would expect as the parameters in the problem change.  Specifically, by considering the definition of $R_{(a,b)}$ in \eqref{eq:S-bg}, we can see that the term $(1 + R_{(a,b)})^{-1}$ approaches $1$ as the BG sparsity parameter, $\epsilon$, approaches $1$, and approaches $0$ as $\epsilon$ approaches $0$, meaning that as the signal gets more sparse ($\epsilon$ approaches $0$) the denoiser provides more shrinkage ($(1 + R_{(a,b)})^{-1}$ approaches $0$). The second term of \eqref{eq:denoiser_BG}, i.e.\ $ \frac{a \widehat{\sigma}^2 + b \lambda_t^2}{ \widehat{\sigma}^2 + \lambda_t^2 + \widehat{\sigma}^2 \lambda_t^2 }$,  is a weighted sum between the pseudo-data and the SI. As the SI noise $\widehat{\sigma}^2$ increases, a larger weight is placed on the pseudo-data.  Similarly as the noise in the pseudo-data $\lambda_t^2$ increases, a larger weight is placed on the SI.

Now we derive Result~\ref{result1}.  In what follows, the notation $\psi_{\tau^2}(x)$ refers to the zero-mean Gaussian density with variance $\tau^2$ evaluated at $x$. We will use $f(\cdot)$ (or $f(\cdot, \cdot)$, $f(\cdot, \cdot,\cdot)$, and so on)
to represent a generic pdf (or joint pdf) on the input. Before we begin the derivation, we introduce a few lemmas relating to computations involving two RVs $A = \rho X + \mathcal{N}(0, \sigma_a^2)$ and $B = X + \mathcal{N}(0,\sigma_b^2)$. Deriving the conditional denoiser for BG (and later BDD) requires the joint pdf between $A$ and $B$ (Lemma \ref{lem:jointGauss}),
the product of two Gaussian pdfs (Lemma \ref{lem:gauss_prod}),
and the expectation of $X$ conditional on instances of $A$ and $B$ (Lemma~\ref{lem:jointCond}).

\begin{lemma} Given instances $a$ and $b$ such that $A = \rho X + \mathcal{N}(0, \sigma_a^2)$ for some constant $\rho$, $B = X + \mathcal{N}(0,\sigma_b^2)$, and $X \sim \mathcal{N}(0, \sigma_x^2)$, the joint pdf between $A$ and $B$ is:
\begin{align*}
f(a, b) &= \frac{1}{\rho} \psi_{\sigma_x^2 + \sigma_b^2}(b) \psi_{\frac{\sigma_x^2 \sigma_b^2}{\sigma_x^2 + \sigma_b^2} + \frac{\sigma_a^2}{\rho^2}} \left( \frac{\sigma_x^2 b}{\sigma_x^2+\sigma_b^2}  - \frac{a}{\rho} \right),
\end{align*}
assuming that the AWGN in $A$, AWGN in $B$, and $X$ are independent.
\label{lem:jointGauss}
\end{lemma}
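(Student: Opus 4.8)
The plan is to exploit that $(A,B)$ is a jointly Gaussian, mean-zero pair and to recover the stated factored form by conditioning on $B$, rather than by directly inverting the $2\times 2$ covariance matrix. Writing $f(a,b) = f(b)\,f(a\mid b)$, I would first observe that since $B = X + \mathcal{N}(0,\sigma_b^2)$ with $X\sim\mathcal{N}(0,\sigma_x^2)$ and independent additive noise, the marginal law is $B\sim\mathcal{N}(0,\sigma_x^2+\sigma_b^2)$. This gives $f(b)=\psi_{\sigma_x^2+\sigma_b^2}(b)$ directly, which already accounts for the first Gaussian factor in the claimed expression.

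Next I would compute the posterior of $X$ given $B=b$. By standard Gaussian conjugacy, $X\mid B=b \sim \mathcal{N}\big(\tfrac{\sigma_x^2}{\sigma_x^2+\sigma_b^2}\,b,\ \tfrac{\sigma_x^2\sigma_b^2}{\sigma_x^2+\sigma_b^2}\big)$. Since $A=\rho X + \mathcal{N}(0,\sigma_a^2)$ with its additive noise independent of $X$ and $B$, the conditional law of $A$ is again Gaussian, with mean and variance transported through the affine map: $A\mid B=b \sim \mathcal{N}\big(\rho\tfrac{\sigma_x^2}{\sigma_x^2+\sigma_b^2}\,b,\ \rho^2\tfrac{\sigma_x^2\sigma_b^2}{\sigma_x^2+\sigma_b^2}+\sigma_a^2\big)$. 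This produces $f(a\mid b)$ as a single univariate Gaussian density in $a$, and multiplying by $f(b)$ yields the joint pdf up to a rewriting.

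The remaining work is purely cosmetic: casting this conditional density into the precise form claimed, with its $1/\rho$ prefactor and its shifted, scaled argument $\tfrac{\sigma_x^2 b}{\sigma_x^2+\sigma_b^2}-\tfrac{a}{\rho}$. I would factor $\rho^2$ out of the conditional variance, writing $\rho^2\tfrac{\sigma_x^2\sigma_b^2}{\sigma_x^2+\sigma_b^2}+\sigma_a^2 = \rho^2(\tfrac{\sigma_x^2\sigma_b^2}{\sigma_x^2+\sigma_b^2}+\tfrac{\sigma_a^2}{\rho^2})$, and simultaneously rewrite the centered argument as $a - \rho\tfrac{\sigma_x^2}{\sigma_x^2+\sigma_b^2}\,b = -\rho(\tfrac{\sigma_x^2 b}{\sigma_x^2+\sigma_b^2}-\tfrac{a}{\rho})$. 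The identity that closes the gap is the scaling rule for zero-mean Gaussian densities, $\psi_{c^2\tau^2}(c\,u) = \tfrac{1}{|c|}\psi_{\tau^2}(u)$, applied with $c=\rho$; it simultaneously absorbs the $\rho^2$ inside the variance and emits the $1/\rho$ out front, reproducing the statement of Lemma~\ref{lem:jointGauss}.

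I anticipate no genuine obstacle, since the result is a clean consequence of Gaussian conditioning; the only care needed is bookkeeping in the final step, namely tracking the $\rho$ factors and noting that the sign inside the Gaussian argument is immaterial because the density is even, so $1/|\rho|$ may be written $1/\rho$ under the implicit convention $\rho>0$. A fully equivalent alternative would be to marginalize directly, $f(a,b)=\int \psi_{\sigma_a^2}(a-\rho x)\,\psi_{\sigma_b^2}(b-x)\,\psi_{\sigma_x^2}(x)\,dx$, and complete the square in $x$; this route is more computation-heavy but quotes no conjugacy formula, and it is the natural place where the product-of-Gaussians manipulation of Lemma~\ref{lem:gauss_prod} would be invoked.
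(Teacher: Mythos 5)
Your proof is correct, but it takes a genuinely different route from the paper's. Both arguments open with the same decomposition $f(a,b)=f(b)\,f(a\mid b)$, yet the paper never computes $f(a\mid b)$ as a closed-form conditional: it expands it as $\int f(x\mid b)\,f(a\mid b,x)\,dx$, cancels $f(b)$, and reduces the joint density to the triple-Gaussian integral $\int \psi_{\sigma_x^2}(x)\,\psi_{\sigma_b^2}(b-x)\,\psi_{\sigma_a^2}(a-\rho x)\,dx$, which it then evaluates via the product-of-Gaussians identity of Lemma~\ref{lem:gauss_prod} --- exactly the ``fully equivalent alternative'' you sketch in your last paragraph. Your main line instead avoids integration entirely: you quote Gaussian conjugacy for $X\mid B=b$, push the posterior through the affine map $x\mapsto\rho x$ plus independent noise to get $A\mid B=b\sim\mathcal{N}\bigl(\rho\sigma_x^2 b/(\sigma_x^2+\sigma_b^2),\ \rho^2\sigma_x^2\sigma_b^2/(\sigma_x^2+\sigma_b^2)+\sigma_a^2\bigr)$, and close with the scaling rule $\psi_{c^2\tau^2}(cu)=\psi_{\tau^2}(u)/\lvert c\rvert$; all three steps check out, and your final bookkeeping (factoring $\rho^2$ out of the variance, flipping the sign of the argument by evenness of the density) is exactly right. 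As for what each approach buys: yours is shorter, makes the two factors of the lemma transparently interpretable as $f(b)$ and $f(a\mid b)$, and is more careful about the $1/\lvert\rho\rvert$ versus $1/\rho$ point, which the paper silently writes as $1/\rho$ (harmless in its applications, where $\rho=1$ for BG and $\rho>0$ for BDD); the paper's route is more self-contained, deriving everything by completing the square through Lemma~\ref{lem:gauss_prod}, which it states anyway for independent use, rather than citing conjugacy formulas whose own proofs are the same square-completion in disguise.
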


Lemma \ref{lem:jointGauss} is proved in Appendix~\ref{app:jointGauss}.

Below, we denote the $\mathcal{N}(\mu, \sigma^2)$ density evaluated at $x$ by $\widetilde{\psi}_{\mu, \sigma^2}(x)$.

The next lemma provides a simplified expression for the product of two Gaussian densities.
\begin{lemma}
For two Gaussian densities,
$\widetilde{\psi}_{\mu_1, \sigma_1^2}(x) \times \widetilde{\psi}_{\mu_2, \sigma_2^2}(x)$ equals
\[\widetilde{\psi}_{\left( \frac{ \mu_1 \sigma_2^2 +  \mu_2 \sigma_1^2}{\sigma_1^2 + \sigma_2^2}, \frac{\sigma_1^2 \sigma_2^2}{\sigma_1^2 + \sigma_2^2}\right)}(x) \times \widetilde{\psi}_{(\mu_1 - \mu_2, \sigma_1^2 + \sigma_2^2)}(0).\]
\label{lem:gauss_prod}
\end{lemma}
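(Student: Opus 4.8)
The plan is to establish this identity by the standard completing-the-square manipulation for products of Gaussian densities, reducing everything to an algebraic comparison of exponents and normalizing constants. First I would write both factors explicitly and isolate the exponent, since the prefactors are trivial to track separately:
\[\widetilde{\psi}_{\mu_1, \sigma_1^2}(x)\,\widetilde{\psi}_{\mu_2, \sigma_2^2}(x) = \frac{1}{2\pi\sigma_1\sigma_2}\exp\left(-\frac{(x-\mu_1)^2}{2\sigma_1^2} - \frac{(x-\mu_2)^2}{2\sigma_2^2}\right).\]
Everything then comes down to rewriting the sum of the two quadratics in the exponent.

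Next I would place the two quadratic terms over the common denominator $2\sigma_1^2\sigma_2^2$, expand the numerator $\sigma_2^2(x-\mu_1)^2 + \sigma_1^2(x-\mu_2)^2$, and collect it as a single quadratic in $x$ whose leading coefficient is $\sigma_1^2 + \sigma_2^2$. Completing the square in $x$ isolates a term proportional to $(x-\mu_*)^2$ with $\mu_* = (\mu_1\sigma_2^2 + \mu_2\sigma_1^2)/(\sigma_1^2 + \sigma_2^2)$, which is precisely the mean claimed on the right-hand side, and whose associated variance is $\sigma_*^2 = \sigma_1^2\sigma_2^2/(\sigma_1^2 + \sigma_2^2)$, the claimed variance. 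This accounts for the first factor $\widetilde{\psi}_{\mu_*, \sigma_*^2}(x)$.

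The hard part — really just careful bookkeeping rather than a conceptual obstacle — will be verifying that the residual $x$-independent constant left over after completing the square is exactly the exponent of a $\mathcal{N}(\mu_1 - \mu_2, \sigma_1^2 + \sigma_2^2)$ density evaluated at $0$. Concretely, this amounts to proving the cancellation identity
\[(\sigma_2^2\mu_1^2 + \sigma_1^2\mu_2^2)(\sigma_1^2 + \sigma_2^2) - (\sigma_2^2\mu_1 + \sigma_1^2\mu_2)^2 = \sigma_1^2\sigma_2^2(\mu_1 - \mu_2)^2,\]
which makes the bulky cross terms collapse into the clean difference $(\mu_1 - \mu_2)^2$. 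Dividing the residual by $2\sigma_1^2\sigma_2^2$ then yields exactly $(\mu_1 - \mu_2)^2 / \big(2(\sigma_1^2 + \sigma_2^2)\big)$, which is the exponent of $\widetilde{\psi}_{\mu_1-\mu_2,\,\sigma_1^2+\sigma_2^2}(0)$.

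Finally I would reconcile the normalizing constants. The product $\widetilde{\psi}_{\mu_*, \sigma_*^2}(x)\,\widetilde{\psi}_{\mu_1-\mu_2,\,\sigma_1^2+\sigma_2^2}(0)$ carries the prefactor $1/\big(2\pi\sqrt{\sigma_*^2(\sigma_1^2+\sigma_2^2)}\big)$, and since $\sigma_*^2(\sigma_1^2+\sigma_2^2) = \sigma_1^2\sigma_2^2$, this collapses to $1/(2\pi\sigma_1\sigma_2)$, matching the prefactor of the left-hand side computed above. With both the $x$-dependent exponential factors and the constant prefactor matched, the claimed identity follows. I expect no genuine difficulty beyond keeping the algebra organized in the completing-the-square step.
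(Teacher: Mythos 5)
Your proof is correct and follows exactly the route the paper itself indicates: the paper dispenses with Lemma~\ref{lem:gauss_prod} by noting it "involves straightforward algebra and completing the square," and your completing-the-square computation, cancellation identity $(\sigma_2^2\mu_1^2+\sigma_1^2\mu_2^2)(\sigma_1^2+\sigma_2^2)-(\sigma_2^2\mu_1+\sigma_1^2\mu_2)^2=\sigma_1^2\sigma_2^2(\mu_1-\mu_2)^2$, and matching of the prefactor via $\sigma_*^2(\sigma_1^2+\sigma_2^2)=\sigma_1^2\sigma_2^2$ carry out that sketch in full and correct detail. The only alternative the paper mentions, viewing the identity as a convolution of Gaussian densities, you do not pursue, but that is merely an optional reformulation, not a gap.
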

The proof of Lemma \ref{lem:gauss_prod} involves
straightforward algebra and completing the square;
the lemma could also be formulated as
a convolution of three Gaussian densities.

The final lemma generalizes the conditional expectation of a Gaussian random variable $X$ conditioned on the value of two noisy versions of $X$, particularly $A \sim \rho X + \mathcal{N}(0, \sigma_a^2)$ and $B \sim X + \mathcal{N}(0,\sigma_b^2)$. We will use the shorthand notation $\mathbb{E}[X \,|\, a,b]$ to mean
\[
\mathbb{E} [X | a =  \rho X + \mathcal{N}(0, \sigma_a^2), b = X + \mathcal{N}(0,\sigma_b^2)].
\]
\begin{lemma} The conditional expectation of a Gaussian RV $X \sim \mathcal{N}(0, \sigma_x^2)$ given instances $a$ and $b$ such that $A \sim \rho X + \mathcal{N}(0, \sigma_a^2)$ for some constant $\rho$ and $B \sim X + \mathcal{N}(0,\sigma_b^2)$ can be computed as:
\begin{equation*}
\mathbb{E} [X \,|\, a,b] = \frac{ \rho\sigma_x^2 \,  \sigma_b^2 a +  \sigma_x^2 \sigma_a^2 b}{\sigma_x^2 \, (\sigma_a^2 + \rho^2 \sigma_b^2) + \sigma_a^2 \sigma_b^2},
\end{equation*}
assuming that the AWGN in $A$, AWGN in $B$, and $X$ are independent.
\label{lem:jointCond}
\end{lemma}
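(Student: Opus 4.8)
**

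The plan is to compute the conditional expectation $\mathbb{E}[X \,|\, a, b]$ directly from Bayes' rule. Since $X \sim \mathcal{N}(0, \sigma_x^2)$, and the two observations $A = \rho X + \mathcal{N}(0, \sigma_a^2)$ and $B = X + \mathcal{N}(0, \sigma_b^2)$ are both linear-Gaussian in $X$ with independent additive noise, the triple $(X, A, B)$ is jointly Gaussian. For jointly Gaussian random variables the conditional expectation is a linear (affine) function of the conditioning values, and since all means are zero it will be a pure linear combination $\alpha a + \beta b$. So I would first assert joint Gaussianity and then identify the coefficients.

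Concretely, I would write the posterior density of $X$ given $a$ and $b$ as proportional to the product of three Gaussian factors:
\[
f(x \,|\, a, b) \propto \psi_{\sigma_x^2}(x)\,\psi_{\sigma_a^2}(a - \rho x)\,\psi_{\sigma_b^2}(b - x).
\]
Collecting the exponent as a quadratic in $x$, the coefficient of $x^2$ is $\tfrac{1}{\sigma_x^2} + \tfrac{\rho^2}{\sigma_a^2} + \tfrac{1}{\sigma_b^2}$ and the coefficient of $x$ is $\tfrac{\rho a}{\sigma_a^2} + \tfrac{b}{\sigma_b^2}$. Completing the square identifies the posterior as Gaussian, and the posterior mean (which is exactly $\mathbb{E}[X \,|\, a, b]$) is the ratio of the linear coefficient to twice the quadratic coefficient, namely
\[
\mathbb{E}[X \,|\, a, b] = \frac{\tfrac{\rho a}{\sigma_a^2} + \tfrac{b}{\sigma_b^2}}{\tfrac{1}{\sigma_x^2} + \tfrac{\rho^2}{\sigma_a^2} + \tfrac{1}{\sigma_b^2}}.
\]
Multiplying numerator and denominator by $\sigma_x^2 \sigma_a^2 \sigma_b^2$ converts this into the claimed closed form with numerator $\rho \sigma_x^2 \sigma_b^2 a + \sigma_x^2 \sigma_a^2 b$ and denominator $\sigma_x^2(\sigma_a^2 + \rho^2 \sigma_b^2) + \sigma_a^2 \sigma_b^2$.

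An alternative route, which dovetails with the lemmas already proved, is to invoke Lemma~\ref{lem:jointGauss} for the joint density $f(a,b)$ and Lemma~\ref{lem:gauss_prod} for combining Gaussian factors: one writes $\mathbb{E}[X \,|\, a, b] = \int x\, f(x \,|\, a, b)\, dx$, where $f(x \,|\, a, b) = f(a, b \,|\, x) f(x) / f(a,b)$, and the numerator is again a product of Gaussians in $x$ whose mean can be read off using Lemma~\ref{lem:gauss_prod}. This keeps the derivation modular and reuses the completing-the-square computation already packaged in the earlier lemmas.

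There is no genuine obstacle here — joint Gaussianity guarantees an affine conditional mean, so the only work is bookkeeping. The one place to be careful is the algebraic simplification: after completing the square one must clear the three reciprocal variances consistently so that the final numerator and denominator match the stated expression, and in particular track the factor of $\rho$ correctly (it appears linearly in the numerator term multiplying $a$ and quadratically, as $\rho^2$, in the $\sigma_b^2$ term of the denominator). I would therefore present the completing-the-square step explicitly and then perform the clearing of denominators as the single nontrivial algebraic manipulation, noting that the independence of the two noise terms and of $X$ is exactly what allows the posterior to factor as a product of the three Gaussians above.
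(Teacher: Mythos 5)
Your proof is correct, but it takes a genuinely different route from the paper's. The paper proves Lemma~\ref{lem:jointCond} in observation space: it notes that $(X,A,B)$ are jointly Gaussian, invokes the standard linear-MMSE formula $\mathbb{E}[X\,|\,a,b]=C_1^T C_2^{-1}[a\ \, b]^T$ (citing a standard textbook result), computes the cross-covariance $C_1=\sigma_x^2[\rho\ \,1]^T$ and the $2\times 2$ auto-covariance $C_2$ of $(A,B)$ entrywise, and inverts $C_2$. You instead work in parameter space: you factor the posterior of $X$ as prior times the two likelihoods --- which is precisely where the independence assumption enters --- complete the square, and read off the precision-weighted mean
\[
\mathbb{E}[X\,|\,a,b]=\frac{\rho a/\sigma_a^2+b/\sigma_b^2}{1/\sigma_x^2+\rho^2/\sigma_a^2+1/\sigma_b^2},
\]
which clears to the stated expression; your verbal gloss ``ratio of the linear coefficient to twice the quadratic coefficient'' is slightly off in its factor-of-two convention, but the displayed formula is right. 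Your information-form argument is more elementary and self-contained: no matrix inversion, no external theorem, the posterior variance comes for free, and linearity of the conditional mean is a conclusion rather than an input. The paper's covariance route buys something else: it applies verbatim to estimating any zero-mean jointly Gaussian quantity, not only the latent variable appearing in every factor of the posterior. Indeed, the paper reuses exactly this machinery in \eqref{eq:c3-exp3b} to compute $\mathbb{E}[Z_c\,|\,a,b]$ for the drift noise, where the cross-covariance vector becomes $[\sigma^2\ \,0]^T$; reproducing that step with your approach would require completing the square over a two-dimensional latent $(X_p,Z_c)$. Your suggested alternative via Lemmas~\ref{lem:jointGauss} and~\ref{lem:gauss_prod} is closer in spirit to how the paper derives the joint density, but it is not how the paper proves this lemma.
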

The proof of Lemma \ref{lem:jointCond} can be found in Appendix~\ref{app:jointCond}.

\subsection{Derivation of the Denoiser with SI for BG}
Using the aforementioned lemmas, we derive the conditional denoiser for the BG model.

{\bf Derivation of Result \ref{result1}}.  To derive Result \ref{result1}, note that
\[
{\small
\eta(a,b) = \mathbb{E} [X | a = X + \mathcal{N}(0, \lambda_t^2), b =  X + \mathcal{N}(0,\widehat{\sigma}^2)],
}
\]
and therefore,
\begin{align}
\eta(a,b)
 & =  \Pr(X \neq 0 \, | \, a,b ) \mathbb{E} [ X \, | \, a,b, X \neq 0 ]. \label{eq:eta-bg}
 \end{align}
Simplifying the expression $\Pr( X \neq 0 \, | \, a,b )$,
\begin{align}
\Pr(X \neq 0 \, | \,  a,b)
&= \frac{f(X \neq 0, a, b)}{f(X \neq 0, a, b) + f(X = 0, a,b)}
\label{eq:1-bg} \\
&= \left[1 + \frac{\Pr(X=0) f(a,b \, | \, X = 0)}{\Pr(X \neq 0) f(a, b \, | \, X \neq 0)}\right]^{-1}. \nonumber
\end{align}
Note that here we slightly abuse the notation of a pdf with an event (i.e., $X \neq 0$ or $X=0$) as an input to the density function.
Considering the ratio in \eqref{eq:1-bg}, define 
\[R_{(a,b)} =\frac{\Pr(X=0) f(a,b \, | \, X = 0)}{\Pr(X \neq 0) f(a, b \, | \, X \neq 0)}.\]
Conditioned on $X \neq 0$, we can compute $ f(a, b \, | \, X \neq 0)$ using Lemma \ref{lem:jointGauss} with $\rho = 1$, $\sigma_x^2 =1$, $\sigma_a^2 = \lambda_t^2$, and $\sigma_b^2 = \widehat{\sigma}^2$:
\begin{align*}
f(a,b| X \neq 0) =  \psi_{1 + \widehat{\sigma}^2}(b) \psi_{\frac{\widehat{\sigma}^2}{1 + \widehat{\sigma}^2} +\lambda_t^2} \left( \frac{b}{1+\widehat{\sigma}^2}  - a \right).
\end{align*}
Also, when $X = 0$, $A$ and $B$ are independent so
\begin{align*}
f(a,b \, | \, X = 0) &= f(a \, | \, X = 0) f(b \, | \, X = 0) \\
& = \psi_{\lambda_t^2}(a) \psi_{\widehat{\sigma}^2}(b).
\end{align*}
With these elements, we can compute $R_{(a,b)}$:
\begin{equation}
R_{(a,b)} = \frac{(1-\epsilon) \psi_{\lambda_t^2}(a) \psi_{\widehat{\sigma}^2}(b)}{\epsilon \psi_{1 + \widehat{\sigma}^2}(b) \psi_{\frac{\widehat{\sigma}^2}{1 + \widehat{\sigma}^2} +\lambda_t^2} \left( \frac{1}{1+\widehat{\sigma}^2} b - a \right)}.
\label{eq:S-bg}
\end{equation}
The last term we must compute is the conditional expectation in \eqref{eq:eta-bg}. Using Lemma \ref{lem:jointCond} with $\rho = 1$, $\sigma_x^2 =1$, $\sigma_a^2 = \lambda_t^2$, and $\sigma_b^2 = \widehat{\sigma}^2$, we have that
\begin{equation}
\mathbb{E}[X | a,b] = \mathbb{E}[X | a,b, X \neq 0] = \frac{\widehat{\sigma}^2 a +  \lambda_t^2 b}{ \lambda_t^2 + \widehat{\sigma}^2 + \lambda_t^2 \sigma_b^2}.
\label{eq:exp-bg}
\end{equation}

Result \ref{result1} is obtained by combining the above computations. In particular, we have that
\begin{equation*}
\eta(a,b) = \left(1 + R_{(a,b)}\right)^{-1} \mathbb{E} [X | a,b],
\end{equation*}
where $R_{(a,b)}$ and $ \mathbb{E} [X  | a,b]$ are computed in \eqref{eq:S-bg} and \eqref{eq:exp-bg}, respectively.

\subsection{State Evolution for BG}
Using the denoiser in \eqref{eq:denoiser_BG}, we can compute the SE equations \eqref{eq:SE2}.  Letting $\delta = \frac{M}{N}$, we have $\lambda_0^2 =  \sigma_z^2 + \frac{1}{\delta}\mathbb{E}[X^2]$ and for $t \geq 0$,
 \begin{align*}
\lambda_{t+1}^2 =  \sigma_z^2 + \frac{1}{\delta} \mathbb{E}\left[ (\eta_t(X+\lambda_t Z_{1}, X+\widehat{\sigma} Z_{2})-X)^2\right],
\end{align*}
where $\eta_t(\cdot, \cdot)$ is defined in \eqref{eq:denoiser_BG}, $Z_1$ and $Z_2$ are independent, standard Gaussian RVS that are independent of $X \sim f(X)$, and the expectation is with respect to $Z_1, Z_2$, and $X$. Because the form of the derived denoiser is complicated, it seems infeasible to find a closed-form expression for $\lambda_{t+1}^2$. Instead, we approximate the SE in our numerical experiments.

\section{Birth-Death-Drift Model} \label{sec:BDD}
In this section, we investigate the application of AMP-SI on a stochastic signal model closely resembling the channel estimation problem in wireless communications.
Our birth-death-drift (BDD) model is based on Markov processes for the supports and
amplitudes of signal elements, and has been studied in the time-varying signal literature~\cite{DCSAMP}.
This section presents the dynamics of individual elements more explicitly.

\subsection{Connections to Channel Estimation}
\label{subsec:channel_est}

\textbf{BDD Motivation:} Our channel estimation scenario is illustrated in Fig.~\ref{fig:batch}.
Typical wireless devices transmit a pilot sequence and data payload in {\em batches}.
In batch~$b$, the pilot sequence $p$ is transmitted into the channel, where it is convolved
with the channel response $x^b$, yielding noisy linear measurements,
\[
y^b = \text{conv}(p,x^b) + z.
\]
This convolution, $\text{conv}(\cdot,\cdot)$,
can be expressed as the product of a Toeplitz matrix with a vector,
\begin{equation}
\label{eq:Toeplitz}
y^b = \text{Toeplitz}(p)x^b + z,
\end{equation}
where $\text{Toeplitz}(p)$ is the Toeplitz matrix that corresponds to the
pilot sequence $p$.
To perform channel estimation using AMP-SI, we will consider (\ref{eq:Toeplitz})
as a linear inverse problem (\ref{eq:hdreg}), where
$\text{Toeplitz}(p)$ is the measurement matrix.
Our goal will be to estimate the channel response $x^b$ in batch~$b$ using
the noisy measurements $y^b$, matrix $\text{Toeplitz}(p)$, and
$\widetilde{x} = \widehat{x}^{b-1}$, our estimate of the channel response in the previous batch, $b-1$
(Fig.~\ref{fig:batch}). Our resulting estimate for the channel response, $\widehat{x}^b$,
will then help us estimate the channel response in the next batch, $x^{b+1}$.
To develop a conditional denoiser, we need a channel model that describes the
channel response $x^b$, and especially its dependence on $x^{b-1}$,
the channel response in the previous batch.
We model the channel as an (unknown) finite impulse response (FIR) filter,
whose taps correspond to the amplitude of the channel response at different delays.
Many filter taps are close to zero, and this sparsity makes the channel estimation problem
a sparse signal recovery task.

Due to the slowly varying time dynamics of the channel, $x^b$ is not only sparse,
but has strong dependencies with the channel response in adjacent batches.
A possible model for changes from $x^b$ to $x^{b+1}$ involves
({\em i}) {\em birth} of new nonzeros in $x^{b+1}$ (corresponding to new wireless paths);
({\em ii}) {\em death} of nonzeros in $x^b$ that become zero in
$x^{b+1}$ (existing paths are obscured as the user moves); and
({\em iii}) slow {\em drift} of existing nonzeros.
We call these time-varying channel dynamics a
{\em birth-death-drift} (BDD) model.
To demonstrate the efficacy of our BDD model, we looked at ray tracing simulations
for a mobile user moving in an urban environment. A photo of the urban environment (a suburb of Washington, DC) is shown in the top panel of Fig.~\ref{fig:raytrace}.
The bottom panel shows two realizations of the channel filter. The realization corresponding to the beginning of the mobile user's motion is depicted by circles,
and the realization corresponding to the end of the user's motion is marked by squares.
It can be seen that most nonzero taps of the channel filter drift slowly; birth and death events are highlighted for the reader's convenience.
Not only is the channel filter in each batch sparse,
but its differences relative to filters in previous batches are highly structured.

\begin{figure}[t!]
\centering
\includegraphics[width=.45\textwidth]{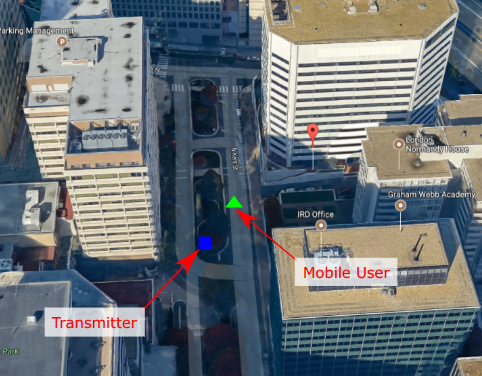}
\includegraphics[width=.5\textwidth]{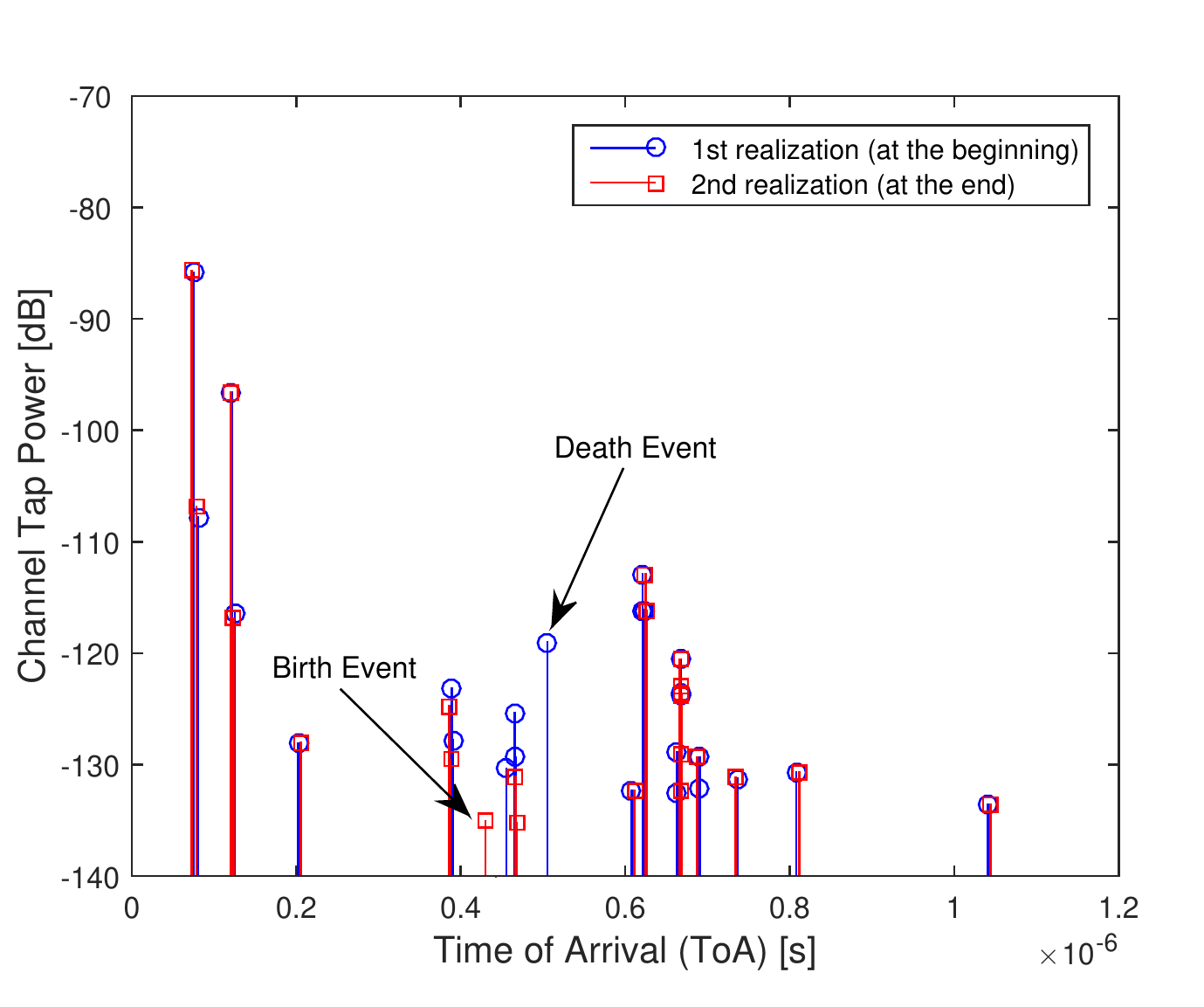}
\caption{Ray tracing simulation results for a mobile user moving in an urban environment (top) show that the channel realizations at the beginning and end help (bottom) lend credence to our BDD model.}\label{fig:raytrace}
\vspace*{-4.5mm}
\end{figure}

For communication-minded readers, the proposed BDD model resembles that of Saleh and Valenzuela~\cite{saleh1987statistical}.
Our paper uses the BDD model for filter taps that are independent within each batch, meaning there are no inter-batch dependencies.
Slow temporal dynamics over multiple batches are prevalent in
real-world channels.
For example, in typical wireless channels a death process involves
the path energy shrinking gradually over multiple time batches~\cite{MacCartney_2017}.
However, our BDD model does not support such dynamics.
Although we only demonstrate the efficacy of AMP-SI on the simplified BDD model,
the framework can be adapted to other settings with SI.
For example,
BDD could be extended to assign different variances to the Gaussian distributions associated with different taps based on a predefined power delay profile (PDP).

{\bf Formal definition of BDD model:} To formally introduce the BDD model, we start by considering a single time batch. Between the previous and current batch, the signal elements independently change according to a BDD
process which defines the joint pdf $f(X_{p}, X_{c})$, where `p' denotes the previous signal, a noisy version that serves as SI, and `c' the current signal.

 The elements of the signal evolve following four cases in the BDD model: for any entry $n \in 1, 2, \ldots, N$,\\
 {\bf Case\hspace*{1mm}1:}\ Zero entry remains zero, i.e.,  $[x_p]_n=0$ and $[x_c]_n=0$. \\
{\bf Case\hspace*{1mm}2:}\ {\em Death} -- nonzero entry becomes zero, i.e.,  $[x_p]_n \sim\mathcal{N}(0,\sigma^2_s)$ and $[x_c]_n =0$. \\
{\bf Case\hspace*{1mm}3:}\ {\em Drift} -- nonzero entry remains nonzero, i.e.,  $[x_p]_n \sim\mathcal{N}(0,\sigma^2_s)$ and $[x_c]_n = \rho [x_p]_n +\mathcal{N}(0,\sigma^2)$. \\
{\bf Case\hspace*{1mm}4:}\ {\em Birth} -- zero entry becomes nonzero, i.e.,  $[x_p]_n =0$ and $[x_c]_n \sim\mathcal{N}(0,\sigma_s^2)$.\\
We define $\sigma^2 > 0$ to be the variance in the zero-mean Gaussian drift and $\sigma_s^2 > 0$ to be the steady-state variance, or the variance of the nonzero entries in the signal at every batch. Indeed, an entry of
the current signal is nonzero in Cases~3 and~4,
and by choosing the constant $\rho>0$ such that $\rho^2 \sigma^2_s + \sigma^2=\sigma^2_s$ , we ensure $var(X_c)= \sigma^2_s$ for both these cases.
Finally, Case j occurs with probability $\epsilon_j$
and $\sum_{j=1}^4 \epsilon_j = 1$.

\begin{remark} The BG model is a simplified version of the BDD model. One can confirm that setting $\epsilon_2 = \epsilon_4 = 0$, $\epsilon_1 = 1-\epsilon$, $\epsilon_3 = \epsilon$, $\sigma = 0$, and $\sigma_s^2 = 1$ obtains the model discussed in Section~\ref{sec:BG}.
\end{remark}

\begin{remark} In Case~3, a zero-mean Gaussian random variable with variance $\sigma^2$
represents short-term fading due to multipath and oscillator drift.
Similarly, $\rho$ represents inter-batch correlations or drift
between nonzero elements of $x$,
and is inversely correlated to the amount of fading in a wireless channel \cite{FadeShadow}.
\end{remark}

In the BDD model, the SI takes the form of the previous batch's signal $x_p$ with AWGN. The pseudo-data, which we label $v_t$, is approximately the current batch's signal $x_c$ with AWGN. That is, at every iteration the algorithm has access to:
\begin{align*}
\widetilde{x} &= x_p + \mathcal{N}(0,\widehat{\sigma}^2 \mathbb{I}), \qquad v_t \approx x_c + \mathcal{N}(0, \lambda^2_t \mathbb{I}),
\end{align*}
where the additive noise in the SI and pseudo-data are independent.
In the multiple batch setting, the pseudo-data in the final iteration of AMP-SI for approximating the $b^{th}$ signal, which is a noisy version of $x_p$, becomes the SI for the approximation of the $(b+1)^{th}$ signal and the variance of this SI is available through $\lambda_t^2$ given by the SE equations~\eqref{eq:SE2}.

\subsection{The Conditional Denoiser with SI for BDD}

We now derive the conditional denoiser for the BDD model presented in Section~\ref{subsec:channel_est}.
Recall that the inputs $a$ and $b$ of the conditional denoiser $\eta(a,b)$ are instances of the
pseudo-data $v_t$ and SI $\widetilde{x}$, respectively.

\begin{result} The AMP-SI denoiser \eqref{eq:eta_CAMP}
has the following closed form for the BDD model,
 \begin{align}
 \eta( a,b )  &= \frac{ \epsilon_4 \, \mu^4_{(a,b)}}{S_{(a,b)}}  \left[\frac{\sigma^2_s \, a}{\sigma^2_s + \lambda^2_t}\right]  \label{eq:denoiser} \\
&\quad + \frac{ \epsilon_3 \, \mu^3_{(a,b)} }{S_{(a,b)}}\left[ \frac{\sigma_s^2 \, (\sigma^2 + \widehat{\sigma}^2) \, a + \rho  \, \sigma_s^2 \,  \lambda_t^2 \, b}{\sigma_s^2 \, (\sigma^2 + \lambda_t^2 + \widehat{\sigma}^2) + \lambda_t^2 \, \widehat{\sigma}^2}  \right], \nonumber
 \end{align}
where $\epsilon_i \mu^i_{(a,b)}$ is the the joint pdf evaluated for Case i and instances $a$ and $b$. Additionally, $S_{(a,b)}$ is the marginal pdf evaluated at instances $a$ and $b$. The variables $\mu^3_{(a,b)}$, $\mu^4_{(a,b)}$, and $S_{(a,b)}$ are defined in \eqref{eq:S} below.
\label{result2}
\end{result}

In what follows, the notation $\psi_{\tau^2}(x)$ refers to the zero-mean Gaussian density with variance $\tau^2$ evaluated at $x$.
\begin{equation}
\begin{split}
\mu^3_{(a,b)} &=  \psi_{\frac{\sigma^2_s \, ( \widehat{\sigma}^2 +  \sigma^2)}{ \widehat{\sigma}^2 + \sigma^2_s} +\lambda_t^2} \, \left(\frac{ \rho  \, \sigma^2_s \, b}{ \widehat{\sigma}^2 + \sigma^2_s} - a\right) \, \psi_{\widehat{\sigma}^2 + \sigma^2_s}(b), \\
\mu^4_{(a,b)} &=   \psi_{\sigma^2_s +\lambda_t^2}(a) \, \psi_{ \widehat{\sigma}^2}(b) , \\
S_{(a,b)} &= \epsilon_1  \,  \psi_{\lambda_t^2}(a)\, \psi_{ \widehat{\sigma}^2}(b) +\epsilon_2  \, \psi_{\lambda_t^2}(a) \, \psi_{ \widehat{\sigma}^2 + \sigma_s^2}(b) \\
&\quad + \epsilon_3 \, \mu^3_{(a,b)} + \epsilon_4 \, \mu^4_{(a, b)}.
\label{eq:S}
\end{split}
\end{equation}

\subsection{Derivation of the Denoiser for BDD}

Using the lemmas presented in Section~\ref{sec:BG}, we derive the conditional denoiser for the BDD model.

{\bf Derivation of Result~\ref{result2}.} To derive Result~\ref{result2}, note that
{\small
\[
\eta(a,b) = \mathbb{E} [X_c | a = X_c + \mathcal{N}(0, \lambda_t^2), b =  X_p + \mathcal{N}(0,\widehat{\sigma}^2)],
\]}
which we represent with shorthand $\mathbb{E}[X_c|a,b]$. Then,
\begin{align}
 \eta(a,b)
 & = \sum_{j=3}^4 \Pr( \text{Case j} \, | \, a,b ) \mathbb{E}[ X_c | a,b, \text{Case j} ], \label{eq:etasum}
 \end{align}
where we use the fact that $x_c=0$
in Cases~1 and~2, and so
$\mathbb{E}[X_c \, |\, a, b, \text{Case 1}]= \mathbb{E}[X_c | a, b, \text{Case 2}]=0$.
Considering \eqref{eq:etasum}, let us simplify the expression $\Pr( \text{Case j} \, | \, a,b )$.  In the following we use $f(\cdot)$ (or $f(\cdot, \cdot)$, $f(\cdot, \cdot,\cdot)$, and so on)
to represent a generic pdf (or joint pdf) on the input. By Bayes' Rule,
\begin{align}
&\Pr(\text{Case j} \, | \,  a,b)
= \frac{f(\text{Case j},a,b)}{\sum_{i = 1}^4 f(\text{Case i}, a,b)}.  \label{eq:1}
\end{align}

To derive the denoiser \eqref{eq:denoiser} from \eqref{eq:etasum} and \eqref{eq:1},
we must compute, for $j = \{1, 2, 3, 4\}$:
\begin{equation}
\label{eq:f(Casej,xhat,vt)}
f(\text{Case j}, a, b )=\Pr(\text{Case j}) f( b| \text{Case j}) f( a | \text{Case j},  b),
\end{equation}
along with $\mathbb{E}[ X_c \, |  \, a, b, \text{Case 3} ]$ and $\mathbb{E}[ X_c \, |  \, a, b, \text{Case 4} ]$.

We first address Cases 1, 2, and 4 since
$a = X_c + \mathcal{N}(0, \lambda_t^2)$ and $b =  X_p + \mathcal{N}(0,\widehat{\sigma}^2)$ are
independent in these cases.
In Case 3, these values are {\em dependent} and therefore that case is handled carefully at the end.

\textbf{Cases 1, 2, and 4:} Here, we can simplify \eqref{eq:f(Casej,xhat,vt)} by noting that $f(a \, |  \, \text{Case j}, b) = f(a \, |  \, \text{Case j})$ due to the independence of $a$ and $b$ in these cases. For $j \in \{1, 2, 4 \}$,
\begin{align}
f( \text{Case j}, a,b ) &=  \Pr(\text{Case j}) f(b \,  | \,  \text{Case j}) f(a  \, |  \, \text{Case j}) \nonumber\\
&= \epsilon_j \, \psi_{ \sigma_{b,j}^2}(b) \, \psi_{\sigma_{a,j}^2}(a), \label{eq:cj-prob}
\end{align}
where $\sigma_{a,j}^2 = \mathbb{E}[a^2 \, | \, \text{Case j}]$, and $\sigma_{b,j}^2 = \mathbb{E}[b^2 \, | \,\text{Case j}]$. We also compute $\mathbb{E}[ X_c \, | \, a, b, \text{Case 4} ]$.  This equals $\mathbb{E}[ X_c \, | \, a,  \text{Case 4} ] $ since $b =  \mathcal{N}(0,\widehat{\sigma}^2)$
is independent of $X_c$.
Since $a = X_c + \mathcal{N}(0, \lambda_t^2)$,
the conditional expectation is computed using a Wiener filter,
\begin{align}
\mathbb{E}[ X_c \, |  \, a,  \text{Case 4} ] &= \mathbb{E}[ X_c \, |  \, X_c + \mathcal{N}(0, \lambda_t^2)]= \frac{\sigma^2_s a}{\sigma^2_s + \lambda^2_t}. \label{eq:c4-exp}
\end{align}

\textbf{Case 3:} Here, $a = \rho X_p + \mathcal{N}(0, \sigma^2) + \mathcal{N}(0, \lambda_t^2)$ and $b =  X_p + \mathcal{N}(0,\widehat{\sigma}^2)$
which, in contrast to the above cases, are now dependent through $X_p \sim \mathcal{N}(0, \sigma_s^2)$.
To compute $f(\text{Case 3}, a, b) = P(\text{Case 3})f(a,b|\text{Case 3})$ note that conditional on Case 3, we may apply Lemma \ref{lem:jointGauss} to $f(a,b| \text{Case 3})$ with $X = X_p$, $\sigma_a^2 = \sigma^2 + \lambda_t^2$, and $\sigma_b^2 = \widehat{\sigma}^2$ to obtain:
\begin{align}
&f( \text{Case 3}, a, b) =  \Pr(\text{Case 3}) \, f(a,b  | \text{Case 3}) \nonumber \\
&=  \frac{ \epsilon_3}{\rho} \psi_{\sigma^2_s + \widehat{\sigma}^2}(b) \, \psi_{\frac{\sigma^2_s ( \widehat{\sigma}^2 +  \sigma^2)}{ \sigma^2_s + \widehat{\sigma}^2} + \frac{\sigma^2 + \lambda_t^2}{\rho^2}}\left(\frac{  \sigma^2_s b }{ \sigma^2_s + \widehat{\sigma}^2} - \frac{a}{\rho}\right).\label{eq:c3-prob}
\end{align}

We also need to compute $\mathbb{E}[ X_c | a, b, \text{Case 3} ]$. By linearity of expectation we have
\begin{align}
&\mathbb{E}[ X_c \,| \, a, b, \text{Case 3} ] = \mathbb{E}[ \rho X_p + \mathcal{N}(0, \sigma^2) \, | \, a, b, \text{Case 3}] \nonumber \\
&=  \rho \mathbb{E}[ X_p \, | \, a, b, \text{Case 3} ] +  \mathbb{E}[  \mathcal{N}(0, \sigma^2) \, | \, a, b, \text{Case 3}]. \label{eq:twoterms}
\end{align}
Conditional on Case 3, we can compute the first expectation in \eqref{eq:twoterms} using Lemma \ref{lem:jointCond} with $X = X_p$, $\sigma_a^2 = \sigma^2 + \lambda_t^2$ since $a = \rho X_p + \mathcal{N}(0, \sigma^2 + \lambda_t^2)$, and $\sigma_b^2 = \widehat{\sigma}^2$ since $b = X_p + \mathcal{N}(0, \widehat{\sigma}^2)$:
{\small
\begin{align}
\mathbb{E}[ X_p  | a, b, \text{Case 3} ] &=  \frac{\sigma_s^2 [\rho  \, \widehat{\sigma}^2 a +  (\sigma^2 + \lambda_t^2) b]}{\sigma_s^2  (\sigma^2 + \lambda_t^2 + \rho^2 \widehat{\sigma}^2) + (\sigma^2 + \lambda_t^2) \widehat{\sigma}^2} \nonumber \\
&= \frac{\sigma_s^2[\rho  \,  \widehat{\sigma}^2 a +   (\sigma^2 + \lambda_t^2) b]}{\sigma_s^2(\sigma^2 + \lambda_t^2 + \widehat{\sigma}^2) + \lambda_t^2 \widehat{\sigma}^2}, \label{eq:c3-exp3a}
\end{align}}
\hspace{-4pt}where we use the fact that $\rho^2 \sigma_s^2 + \sigma^2 = \sigma_s^2$ to simplify. Letting $Z_c \sim \mathcal{N}(0, \sigma^2)$ be such that $X_c = \rho X_p + Z_c$, one can use the same approach as in Lemma \ref{lem:jointCond} to obtain:
\begin{align}
&\mathbb{E}[ Z_c  \, | \, a = Z_c + \rho X_p + \mathcal{N}(0, \lambda^2), \\
& \qquad  \qquad b = X_p + \mathcal{N}(0, \widehat{\sigma}^2), \text{Case 3} ] \nonumber \\
&= [\sigma^2 \ \ 0]
\left[ {\begin{array}{cc}
\rho^2 \sigma^2_s + \lambda^2 & \rho \sigma^2_s\\
\rho\sigma^2_s & \sigma^2_s+\widehat{\sigma}^2
\end{array} } \right]^{-1}
\begin{bmatrix} a \\ b \end{bmatrix} \nonumber \\
&= \frac{\sigma^2[(\sigma_s^2 + \widehat{\sigma}^2)a - \rho \sigma_s^2 b] }{\sigma_s^2(\sigma^2 + \lambda_t^2 + \widehat{\sigma}^2) + \lambda_t^2 \widehat{\sigma}^2} \label{eq:c3-exp3b}.
\end{align}
Combining \eqref{eq:c3-exp3a} and \eqref{eq:c3-exp3b}:
{\small
\begin{align}
&\mathbb{E}[ X_c \, | \, a, b, \text{Case 3} ] \nonumber \\
&= \rho \mathbb{E}[ X_p \, | \, a, b, \text{Case 3} ] +  \mathbb{E}[  \mathcal{N}(0, \sigma^2) \, | \, a, b, \text{Case 3} ] \nonumber \\
&=  \frac{\rho \sigma_s^2 [\rho  \,  \widehat{\sigma}^2 a +  (\sigma^2 + \lambda_t^2) b] + \sigma^2 [(\sigma_s^2 + \widehat{\sigma}^2)a - \rho \sigma_s^2 b ]}{\sigma_s^2(\sigma^2 + \lambda_t^2 + \widehat{\sigma}^2) + \lambda_t^2 \widehat{\sigma}^2} \nonumber \\
& = \frac{\sigma_s^2(\sigma^2 + \widehat{\sigma}^2) a + \rho \sigma_s^2  \lambda_t^2 b }{\sigma_s^2(\sigma^2 + \lambda_t^2 + \widehat{\sigma}^2) + \lambda_t^2 \widehat{\sigma}^2}. \label{eq:c3-exp}
\end{align}
}

Result \ref{result2} is obtained by combining the above calculations.
Considering \eqref{eq:etasum} and \eqref{eq:1},
\begin{equation} 
 \eta(a,b)\!=\!\frac{\sum_{j=3}^4 f(\text{Case j}, a, b)\mathbb{E}[ X_c | a, b, \text{Case j} ]}{\sum_{i = 1}^4 f(\text{Case i}, a, b)}, \label{eq:etasum_new}
\end{equation} 
which results in
the denoiser presented in \eqref{eq:denoiser} - \eqref{eq:S} with
$S_{(a, b)} = \sum_{i=1}^4 f(\text{Case i}, a, b)$,
where the probabilities are calculated in  \eqref{eq:cj-prob} and \eqref{eq:c3-prob}, $\epsilon_3 \, \mu^3_{(a, b)} = f(\text{Case 3}, a, b)$ and
$\epsilon_4 \, \mu^4_{(a, b)} = f(\text{Case 4}, a, b)$,
and finally with $\mathbb{E}[ X_c \, | \, a, b, \text{Case 3} ]$ and $\mathbb{E}[ X_c \, | \, a, b, \text{Case 4} ]$ calculated in \eqref{eq:c3-exp} and \eqref{eq:c4-exp}, respectively.

\subsection{State Evolution for BDD}
Using the results from the previous section, specifically the form of the denoiser in \eqref{eq:denoiser},
we can calculate the SE equations \eqref{eq:SE2}. Letting $\delta = \frac{M}{N}$, we have $\lambda_0^2 = \frac{1}{\delta}\mathbb{E}[X_c^2] + \sigma_z^2$ and for $t \geq 0$,
 \begin{align*}
\lambda_{t+1}^2 =  \sigma_z^2 + \frac{1}{\delta} \mathbb{E}\left[ (\eta_t(X_c+\lambda_t Z_{1}, X_p+\widehat{\sigma} Z_{2})-X_c)^2\right],
\end{align*}
where $\eta_t(\cdot, \cdot)$ 
is defined in \eqref{eq:denoiser},
and the RVs 
$Z_1$ and $Z_2$ are both zero mean unit norm Gaussian, and are independent of the
RVs 
$X_p$ and $X_c$, which are distributed according to the prior distributions of $x_p$ and $x_c$.  The expectation is with respect to $Z_1, Z_2, X_p$, and $X_c$, where $X_p$ and $X_c$ are dependent. Similarly to the SE equations for the BG signal model, it seems infeasible to find a closed-form value for the expectation in the SE equations due to the complicated form of the denoiser given in~\eqref{eq:denoiser}. We estimate these values numerically in the following section.

\section{Numerical Results} \label{sec:sims}

Here, we present the empirical performance of AMP-SI for the BG and BDD signal models. All numerical results were generated using MATLAB.

\begin{figure}[t!]
\centering
\vspace*{-3mm}
\hspace{-2mm}
\includegraphics[width=.4\textwidth]{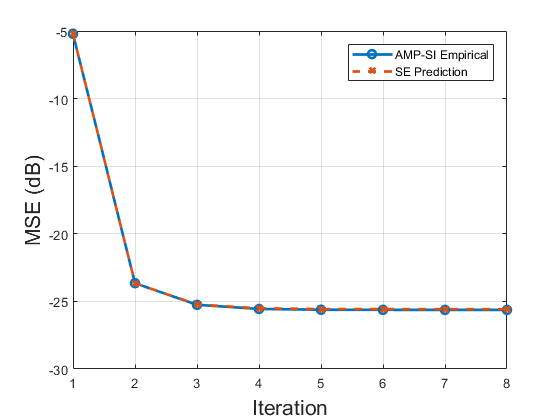}
\caption{Empirical performance of AMP-SI and performance predicted by SE
across iterations. (BG signal, $N=10000$, $M=3000$, $\sigma_z=0.1$,  $\epsilon=0.3$, $\widehat{\sigma}=0.1$.)}
\label{fig:BG_SE}
\vspace*{-4.5mm}
\end{figure}

{\bf BG signal:}\
Fig.~\ref{fig:BG_SE} presents the empirical performance of AMP-SI on a BG signal and the SE prediction of its performance. For this experiment, the signal has dimension $N=10000$,
the SI has standard deviation $\widehat{\sigma} = 0.10$, the number of measurements is $M = 3000$,
and the measurement noise standard deviation is $\sigma_z = 0.10$.
We set $\epsilon = 0.30$ so that approximately $30\%$ of the entries in the signal are nonzero.
The measurement matrix $A \in \mathbb{R}^{M \times N}$ has i.i.d.\ standard Gaussian entries.
The empirical normalized MSE
for AMP-SI is averaged over 20 trials of a BG recovery problem.
We are also plotting MSE results predicted by SE, and it can be seen that the SE prediction
accurately tracks the empiricial performance of AMP-SI.

{\bf BDD signal:}\
Fig.~\ref{fig:batch_mmse} presents experimental results for recovering a signal $x_c$ 
over 15 time batches following the BDD model of Section~\ref{sec:BDD} using AMP, AMP-SI, and DCS-AMP, another AMP-based algorithm for time-varying signals~\cite{DCSAMP}. 
In each time batch, the SI is
the pseudo-data output of AMP-SI in the previous batch, except for the Batch~1 where 
SI is unavailable and we resort to standard AMP. 
For DCS-AMP, we implement the algorithm in filtering mode to match our SI setting. 
The signal $x_c$ is of dimension $N=10000$, the steady-state standard deviation is
$\sigma_s = 1$, the correlation among nonzero entries is $\rho = 0.95$,
and the measurement noise has standard deviation $\sigma_z = 0.077$, which corresponds to  
$\text{SNR}=20\text{dB}$.
The empirical MSE is averaged over 100 trials. For each batch, AMP-based approaches 
often converge within 10--20 iterations, and 30 iterations are used to play it safe. 
We set $\epsilon_1 = 0.80$, $\epsilon_2 = \epsilon_4 = 0.01$, and $\epsilon_3 = 0.18$ so that there are approximately $K = N(\epsilon_3+\epsilon_4)=1900$
nonzero entries per signal. The measurement matrix has i.i.d.\ standard Gaussian entries in each batch, and the number of measurements is
$M=3000$. It can be seen that AMP-SI outperforms DCS-AMP in every batch (except Batch~1 where they resort to AMP). This further supports our belief in AMP-SI's Bayes-optimality properties.

Inspired by~\cite{DCSAMP}, we investigated how transition probabilities between different 
BDD states affected the performance of AMP-SI. 
Within the problem setting of Fig.~\ref{fig:batch_mmse},
we varied $\epsilon_2$, $\epsilon_4 \in \left[0.01, 0.11\right]$, 
and $\rho\in \left[0.05,0.999\right]$.
We found that 
AMP-SI out-performed AMP at  
Batch 15 for all 
configurations. The performance gap was largest for large $\rho$ and small
$\epsilon_2$ and $\epsilon_4$; the gap narrowed as $\rho$ decreased and
$\epsilon_2$ and $\epsilon_4$ increased. 
We also experimented with holding
 $\epsilon_2=\epsilon_4=0.05$ constant and varied $\epsilon_3$ and $\rho$. 
Again, as $\rho$ decreased, the gap between AMP-SI and AMP shrank. 
That said, for large $\eps_3$, AMP-SI still improved reconstruction quality for small
$\rho$. 

\begin{figure}[t!]
\centering
\vspace*{-2.5mm}
\hspace{-5mm}
\includegraphics[width=.5\textwidth]{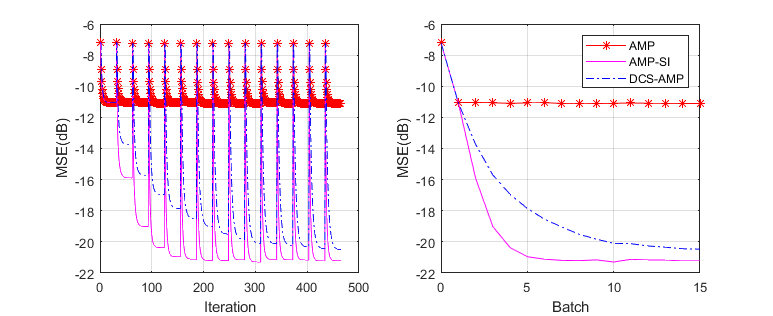}
\caption{Empirical performance of DCS-AMP in filter mode \cite{DCSAMP}, AMP, and AMP-SI 
as a function of 
(left) iterations and (right) batches, where the algorithms
each spend 30 iterations per batch.
(BDD signal, $N=10000$,  $M=3000$, $\sigma_s=1$, $\rho=0.95$, $\sigma_z=0.077$,
$\epsilon_1=0.80$, $\epsilon_2=\epsilon_4=0.01$, $\epsilon_3=0.18$. The tuning parameters for DCS-AMP are determined by converting BDD parameters to corresponding Gauss-Markov parameters \cite{DCSAMP}.)
}
\label{fig:batch_mmse}
\vspace*{-4.5mm}
\end{figure}

{\bf SE for BDD:}\
To highlight the advantages of SI, 
Fig.~\ref{fig:SE} shows the recovery quality predicted by SE. Here, the SI dependencies remain identical to the experiments used for Fig.~\ref{fig:batch_mmse} as we vary the number of measurements $M$ (to show different $\delta=M/N$) and
$\epsilon_1$ and $\epsilon_3$ (to show different
percentages of nonzeros, $\gamma=K/N$). We also hold the measurement noise $\sigma_z=0.01$ constant.
To vary $\gamma$, we keep $\epsilon_2=\epsilon_4 = 0.01$ while modifying the probability
of the drift case, $\epsilon_3$, accordingly.
In each panel, the horizontal axis corresponds to $\delta$,
the vertical axis to $\gamma$,
and shades of gray to the SE prediction of the MSE.
Batch~1 corresponds to the first time the signal is recovered without SI,
Batch~3 uses recovered signals from the second batch as SI, and Batch~10 uses the recovered signal from Batch~9 as SI.
The high-quality dark gray
region in the upper right portion of each panel is expanding,
while the low-quality light gray region is shrinking, showing improved signal recovery
due to the SI.
More specifically, the proportion of the high-quality dark grey region in each subplot is 0.413 (Batch~1), 0.448 (Batch~3), and 0.562 (Batch~10).
It can be seen that the same MSE quality is obtained from a measurement rate $\delta$
lower than without SI.

 \begin{figure*}[th!]
\centering
\hspace*{2mm}\includegraphics[scale=.32]{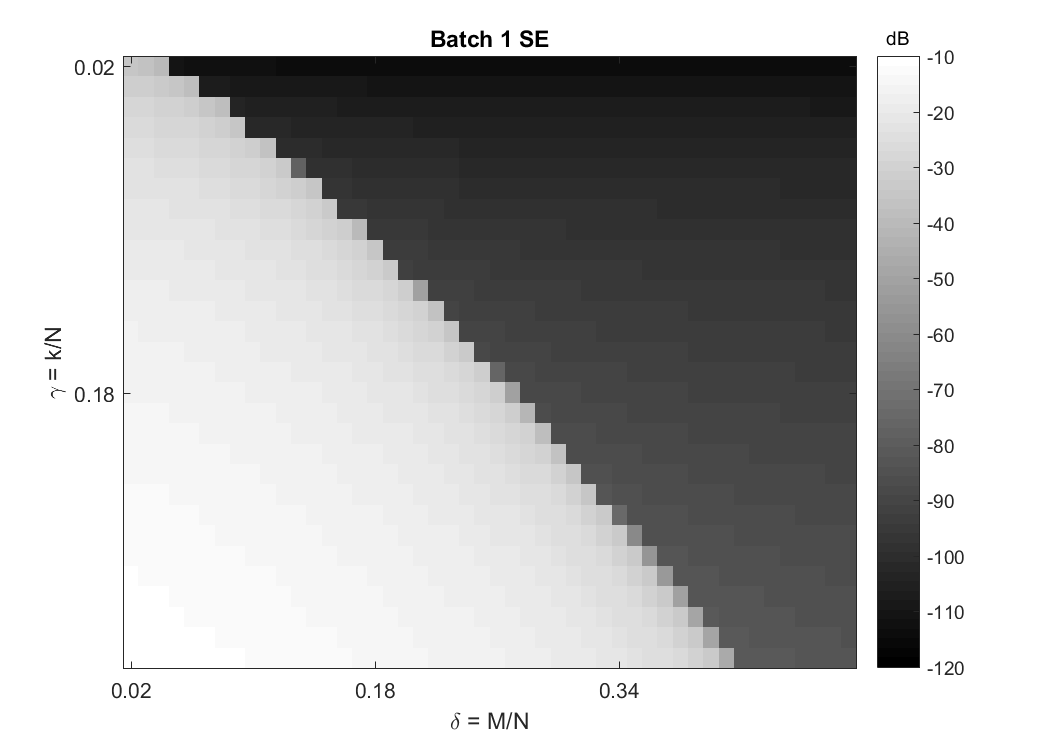}
\hspace*{-11mm}\includegraphics[scale=.32]{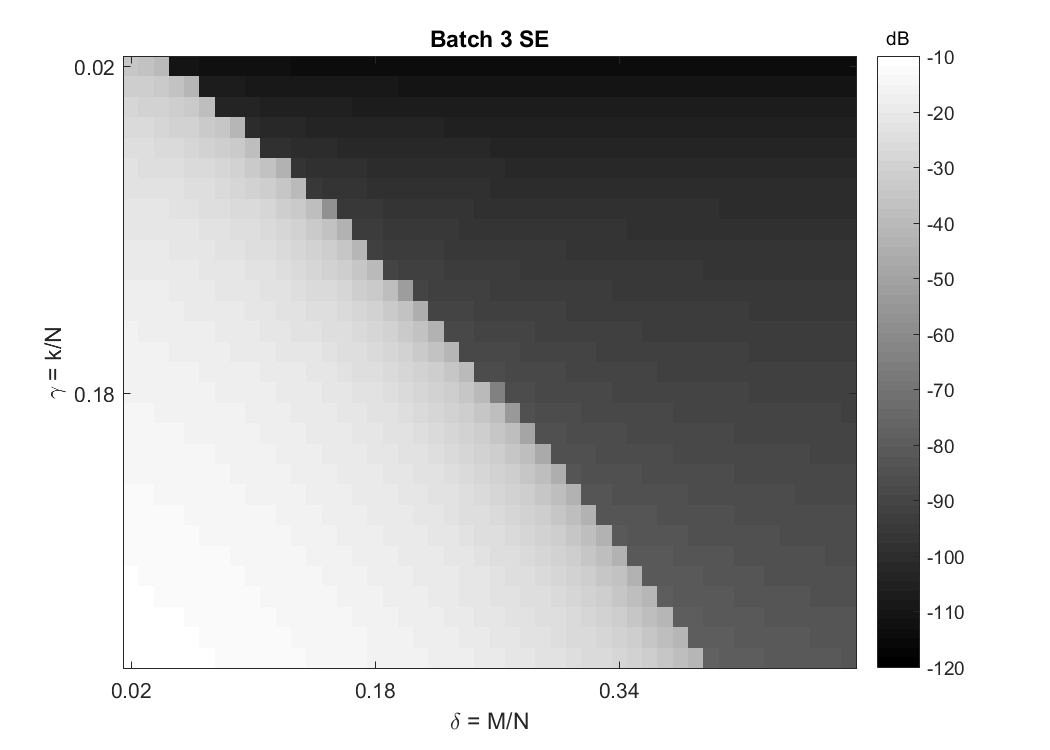}
\hspace*{-11mm}\includegraphics[scale=.32]{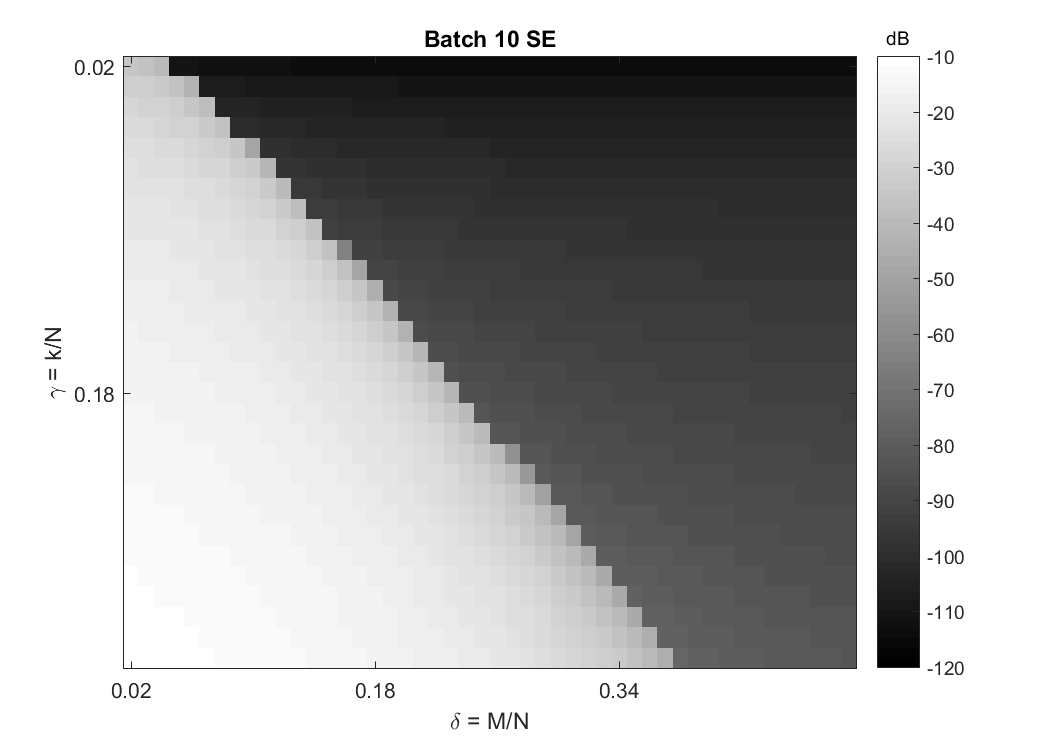}
\caption{AMP-SI for BDD signals.
The MSEs predicted by SE are plotted as shades of gray; they are functions of measurement rate $\delta = \frac{M}{N}$ and sparsity rate $\gamma = \frac{k}{N}$.
From left to right: Batch~1 (without SI),
Batch~3 (SI=Batch~2), and Batch~10 (SI=Batch~9). The `good' dark gray region (upper right corners) expands with more SI.
}
\label{fig:SE}
\vspace*{-4.5mm}
\end{figure*}

{\bf Channel estimation with Toeplitz matrices:}\
So far we used i.i.d.\ Gaussian matrices, and we now transition to Toeplitz matrices
in order to demonstrate that AMP-SI is suitable for channel estimation
(details in Section~\ref{sec:BDD}). Based on \eqref{eq:Toeplitz}, the channel estimation problem deviates from the BDD model in three aspects.
First, as mentioned,
$A$ is Toeplitz rather than i.i.d.\ Gaussian.
It is well known that for non-i.i.d.\ sensing matrices, the standard AMP prescribed by (\ref{eq:AMP1}) and (\ref{eq:AMP2}) often suffers from divergence over iterations.
A common approach to improve convergence of iterative algorithms is damping; in AMP, the standard iteration (\ref{eq:AMP2}) is replaced by $x^{t+1}=\lambda x^t + (1-\lambda) \eta_t( x^t + A^Tr^t)$.
Rangan et al.\ \cite{Rangan2014} demonstrate that damping is effective in aiding the convergence of AMP for some non-i.i.d.\ sensing matrices. It should be noted that supporting theory for AMP-based algorithms is only rigorous for certain classes of random 
matrices~\cite{BayMont11, RanganVAMP}, which exclude Toeplitz 
matrices. Thus 
we evaluate the performance of AMP-SI for Toeplitz matrices by comparing empirical reconstruction results to  standard AMP-SI settings. Lastly, for a pilot sequence $p$, the number of rows of the measurement matrix, $M$, equals $\text{length}(p)+N-1$, which typically exceeds $N$, the number of columns.
This inverse problem is expansive ($M>N$) instead of compressive ($M<N)$, where we remind the reader that AMP and SE theory support arbitrary $\delta>0$
where $\delta = \frac{M}{N}$.

Our experiment had 5 time batches. We set the length of the channel response $N$ to $4000$,
the length of the pilot sequence $\text{length}(p)=1001$, the standard deviation of the steady signal $\sigma_s=1$, the decay rate of nonzeros $\rho=0.95$, and the measurement noise standard deviation $\sigma_z \in \{0.01, 0.1, 1\}$. This setting corresponds to SNR= 0dB, 20dB and 40dB, and $\delta=1.25$. For BDD model parameters, we set $\epsilon_1 = 0.78$, $\epsilon_2 = \epsilon_4 = 0.01$. Thus at each time batch, $21\%$ of the entries of the channel response are nonzero. The individual entries of the pilot $p$ are $\pm 1/\sqrt{\text{length}(p)}=\pm 0.0316$, each with probability 0.5.
We performed damping using parameter  $\lambda=0.9$.
Table~\ref{tab:Toeptab} demonstrates the empirical channel estimation performance of AMP-SI averaged over 50 realizations. Compared to standard AMP (batch~1 in Table~\ref{tab:Toeptab}), AMP-SI consistently achieves lower MSE levels starting from batch~2. One striking observation from Fig.~\ref{fig:Toeplitz} is the similar performance of AMP-SI for Toeplitz (channel estimation) and i.i.d.\ matrices.
This similarity leads us to conjecture that for the given BDD signal model, SE prediction tracks the performance of AMP/AMP-SI with Toeplitz matrices as well as the i.i.d.\ Gaussian case. The conjecture is further evident from Table~\ref{tab:Toepbth5}. Observations from other BDD time batches resemble
batch~5 (Table~\ref{tab:Toepbth5}) and are not included.

\begin{figure}[ht!]
\centering
\vspace*{-2.5mm}
\hspace{-2mm}
\includegraphics[width=.5\textwidth]{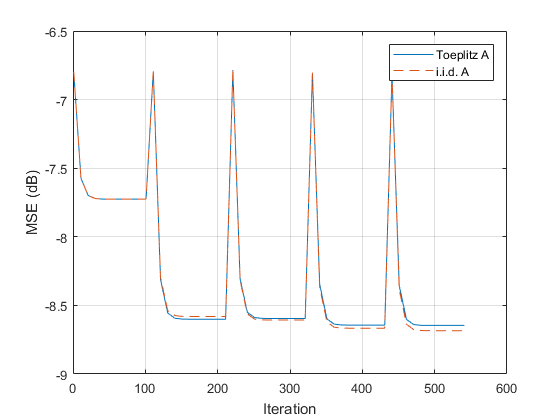}
\caption{Empirical AMP-SI performance with i.i.d.\ and Toeplitz sensing matrices.
(BDD signal, SNR=0dB, averaged over 100 realizations.)}
\label{fig:Toeplitz}
\vspace*{-4.5mm}
\end{figure}

\begin{table}[]
\centering
\begin{tabular}{|c|c|c|c|}
\hline
 & \multicolumn{3}{c|}{\textbf{Channel Estimation MSE(dB)}} \\ \hline
\textbf{SNR} & \textbf{Batch 1} & \textbf{\begin{tabular}[c]{@{}c@{}}Batch 2 \end{tabular}} & \textbf{Batch 5} \\ \hline
0dB & -7.71 & -8.61 & -8.68 \\ \hline
20dB &  -23.49 & -24.74 & -24.86 \\ \hline
40dB & -45.41 & -45.83 & -45.82 \\ \hline
\end{tabular}
\caption{Empirical AMP-SI performance for channel estimation.
(BDD signal, averaged over 50 realizations.)
} \label{tab:Toeptab}
\end{table}

\begin{table}[]
\centering
\begin{tabular}{|c|c|c|c|}
\hline
 & \multicolumn{3}{c|}{\textbf{AMP-SI Performance in MSE(dB) at Time Batch 5}} \\ \hline
\textbf{SNR} & \textbf{~~i.i.d.\ A~~} & \textbf{~~SE Prediction ~~} & \textbf{Toeplitz A} \\ \hline
0dB & -8.68 & -8.64 & -8.65 \\ \hline
20dB & -24.86 & -24.86 & -24.77 \\ \hline
40dB & -45.82 & -45.91 & -45.86 \\ \hline
\end{tabular}
\caption{Empirical AMP-SI performance for i.i.d. matrices,
SE predictions, and empirical performance for Toeplitz matrices.
(BDD signal, Batch~5, averaged over 50 realizations.)
} \label{tab:Toepbth5}
\vspace*{-4.5mm}
\end{table}

\section{Challenges and Future Work}  \label{sec:futurework}

In this work, we presented AMP-SI, a suite of approximate message passing (AMP) based algorithms that utilize \emph{side information} (SI) to aid in signal recovery using \emph{conditional} denoisers.
We derive conditional denoisers for a Bernoulli-Gaussian (BG) signal model and a more complicated time-varying birth-death-drift (BDD) signal model, motivated by channel estimation.
We also conjectured state evolution (SE) properties.
Numerical experiments show that the proposed SE accurately tracks the performance of AMP-SI,
and that AMP-SI achieves the same MSE as AMP using a lower measurement rate.

To simulate the channel estimation task, we additionally consider a Toeplitz measurement matrix as opposed to the standard Gaussian i.i.d.\ matrix. Our results show that AMP-SI is able to obtain a lower MSE than AMP for such a setting.
A challenge and future direction with this line of work is that the current theoretical guarantees for AMP assume that $A$ is an i.i.d.\ matrix.
Although AMP often
diverges when non-i.i.d.\ matrices are used, there is empirical evidence that AMP can successfully perform deconvolution and utilize other structures in various settings \cite{kamilov2013sparse,barbier2015approximate}. 
We believe our AMP-SI framework will lead to new applications in a broad class of problems while also presenting interesting theoretical challenges.

\section*{Acknowledgments}
We are grateful to Yavuz Yapici and Ismail Guvenc who helped us formulate the BDD model, and to Chethan Anjinappa whose numerical ray tracing simulation (Fig.~\ref{fig:raytrace}) helped confirm the model.
Our work originated from earlier work on AMP with SI,
which was joint with Tina Woolf.
We thank Hangjin Liu for conversations about using the recent results by Berthier et al. \cite{berthier2017} for our SE proofs.
Finally, we thank Junan Zhu and Yanting Ma for helping us formulate the problem
and master some of the deeper technical details.

\appendix
\subsection{Proof of Lemma \ref{lem:jointGauss}}   \label{app:jointGauss}

Recall from the Lemma statement that $A = \rho X + \mathcal{N}(0, \sigma_a^2)$ and $B = X + \mathcal{N}(0, \sigma_b^2)$ where $X \sim \mathcal{N}(0, \sigma_x^2)$.

Then from Bayes' rule, $f(a,b) = f(b)f(a \, | \, b)$ and computing $f(a \, | \, b)$ we have:
\begin{align*}
f(a \, | \, b) &= \int_x f(a,x \, | \, b) \,dx = \int_x f(x \, | \, b) f(a  \, | \, b,x) \,dx \\
&\stackrel{(1)}{=} \int_x \frac{f(x)f(b \, | \, x)}{f(b)} \psi_{\sigma_a^2}(a-\rho x)\, dx \\
&= \int_x \frac{ \psi_{\sigma_x^2}(x) \psi_{\sigma_b^2}(b-x)}{f(b)} \psi_{\sigma_a^2}(a-\rho x) \,dx,
\end{align*}
where equality (1) relies on Bayes' rule applied to $f(x \, | \, b)$.
Therefore,
\begin{align*}
f(a,b) &=  f(b)f(a \, | \, b)  \\
&= \int_x \psi_{\sigma_x^2}(x) \psi_{\sigma_b^2}(b-x) \psi_{\sigma_a^2}(a-\rho x) \,dx  \\
&\stackrel{(2)}{=} \frac{1}{\rho} \psi_{\sigma_x^2 + \sigma_b^2}(b) \psi_{\frac{\sigma_x^2 \sigma_b^2}{\sigma_x^2 + \sigma_b^2} + \frac{\sigma_a^2}{\rho^2}} \left( \frac{\sigma_x^2}{\sigma_x^2+\sigma_b^2} b - \frac{1}{\rho}a \right),
\end{align*}
where equality (2) uses Lemma \ref{lem:gauss_prod}.

\subsection{Proof of Lemma \ref{lem:jointCond}} \label{app:jointCond}

Recall from the Lemma statement that $A = \rho X + \mathcal{N}(0, \sigma_a^2)$ and $B = X + \mathcal{N}(0, \sigma_b^2)$ where $X \sim \mathcal{N}(0, \sigma_x^2)$.

Because $X$, $A$, and $B$ are jointly Gaussian RVs, the MMSE-optimal estimator for $X$ conditioned on $a$ and $b$ is linear,
\begin{equation}
\widehat{x}=\mathbb{E}[X \, |\, a, b] = \alpha a + \beta b + \gamma,
\label{eq:linear_MMSE_estimator}
\end{equation}
where $\alpha$, $\beta$, and $\gamma$ are constants.
A well known result (see, e.g., Theorem 9.1 of \cite{MMSEwiki})
states that
$\widehat{x} = W \begin{bmatrix}a\\ b\end{bmatrix} + U$,
where
\begin{align*}
W=C_1^T (C_2)^{-1}, &\quad
U=\mathbb{E}[X]-W \mathbb{E}\begin{bmatrix}A\\ B \end{bmatrix}, \\
C_1=\text{Cov}\left(X,\begin{bmatrix}A \\ B  \end{bmatrix}\right), & \quad C_2=\text{Cov}\left(\begin{bmatrix}A \\ B \end{bmatrix},\begin{bmatrix} A \\ B \end{bmatrix}\right).
\end{align*}
We compute these terms one by one.
First, $X$, $A$, and $B$ all have zero mean, and so $U=0$,
which implies that the constant $\gamma$ in the linear form \eqref{eq:linear_MMSE_estimator}
is zero. Second,
\[
C_1
=\text{Cov}\left(X,\begin{bmatrix}A \\ B \end{bmatrix}\right)
= \begin{bmatrix}\mathbb{E}[XA]\\ \mathbb{E}[X B]\end{bmatrix},
\]
because the zero means ensure that only the cross terms
$\mathbb{E}[X A]$ and $\mathbb{E}[X B]$ appear in the expression for $C_1$.
The cross terms are computed as
\begin{align*}
\mathbb{E}[X A] &=\mathbb{E}[X (\rho X+\mathcal{N}(0, \sigma_a^2))] =\rho \sigma^2_x, \\
\mathbb{E}[X B] &= \mathbb{E}[ X (X+\mathcal{N}(0, \sigma_b^2))] =\sigma_x^2.
\end{align*}
Therefore, $C_1 = \sigma^2_x \begin{bmatrix}\rho\\ 1 \end{bmatrix}$.
Third,
\[
C_2
=\text{Cov}\left(\begin{bmatrix}A \\ B\end{bmatrix},\begin{bmatrix}A \\ B\end{bmatrix}\right)
=\mathbb{E}\left[ \begin{bmatrix}A \\ B\end{bmatrix}[A  \,\, B] \right], \]
where once again only the cross terms need be computed.
These cross terms are
({\em i}) $\mathbb{E}[A^2]=\rho^2 \sigma^2_x+\sigma_a^2$;
({\em ii}) $\mathbb{E}[B^2]=\sigma^2_x+\sigma_b^2$;
and
({\em iii})
\begin{align*}
\mathbb{E}[AB]=\mathbb{E}[BA] &=\mathbb{E}[(\rho X + \mathcal{N}(0, \sigma_a^2))( X + \mathcal{N}(0, \sigma_b^2))] \\
&=\rho\sigma^2_x.
\end{align*}
The MMSE-optimal estimator is
\begin{equation}
\begin{split}
&\mathbb{E}[X\, | \,  a, b]
= W \begin{bmatrix} a \\ b\end{bmatrix}
= C_1^T (C_2)^{-1} \begin{bmatrix}a \\ b \end{bmatrix} \\
&= [\rho\sigma^2_x\ \sigma^2_x]
\left[ {\begin{array}{cc}
\rho^2 \sigma^2_x+\sigma_a^2 & \rho\sigma^2_x \\
\rho\sigma^2_x & \sigma^2_x+\sigma_b^2
\end{array} } \right]^{-1}
\begin{bmatrix}a \\ b\end{bmatrix} \\
&= \frac{\rho \sigma_x^2 \,  \sigma_b^2 a +  \sigma_x^2 \sigma_a^2 b}{\sigma_x^2 \, (\sigma_a^2 + \rho^2 \sigma_b^2) + \sigma_a^2 \sigma_b^2}.
\label{eq:c3-exp}
\end{split}
\end{equation}

\subsection{Fixed points of AMP-SI SE with Gaussian SI} \label{app:asilomar}

This appendix will show that when the SI is a Gaussian-noise corrupted
observation of the true signal,
i.e., $\widetilde{X} = X + \mathcal{N}(0,\widehat{\sigma}^2)$,
the fixed points of AMP-SI SE~\eqref{eq:SE2} coincide with the fixed points of AMP SE~\eqref{eq:AMP_SE} with `effective' measurement rate
$\delta_{eff} = \delta/\mu$ and
`effective' measurement noise variance $\sigma_{eff}^2 = \mu \sigma_{z}^2 $ where $0 \leq \mu \leq 1$ and
$\mu$ depends on the pdf of the signal and the SI noise variance $\widehat{\sigma}^2$.

Before demonstrating the aforementioned Bayes-optimality property of AMP-SI, we use matched filter arguments to provide a simplified representation of the conditional denoiser of \eqref{eq:eta_CAMP} when the SI is the signal viewed with AWGN.
In calculating the AMP-SI denoiser \eqref{eq:eta_CAMP}, we want to calculate the expectation of $X$ conditioned
on the pseudo data, $X+\lambda_t Z_1=a$, and SI, $X+ \widehat{\sigma}Z_2=b$, where $Z_1$ and $Z_2$ are
independent, standard Gaussian RVs. We define signal and noise vectors as $s=[1\ 1]^T$  and
$v=[\lambda_t Z_1\  \widehat{\sigma} Z_2]^T$, respectively, where $[\cdot]^T$ is the transpose operator.
The matched filter estimates the unknown $X$ by computing the inner product between
\[
\left[ {\begin{array}{c}a \\ b\end{array}} \right]
=
\left[ {\begin{array}{c}X+\lambda_t Z_1 \\ X+\widehat{\sigma} Z_2 \end{array}} \right]
=sX+v,
\]
and a matched filter $h \in \mathbb{R}^2$. An optimal $h^*$ that maximizes
the signal to noise ratio while having unit norm is computed by inverting
$R_v=E[vv^T]$, the auto-covariance matrix of $v$, 
\[
h^*= { (R_v)^{-1} s} /{\|(R_v)^{-1} s\|}.
\]
It can be shown that $h^*=[ \widehat{\sigma}^2\ \lambda^2_t]^T/(\widehat{\sigma}^2+\lambda_t^2)$,
and the inner product is defined as $\mu^t(a,b):$
\begin{equation}
\mu^t(a,b) = \langle [a\ b]^T, h^* \rangle
= \frac{a \widehat{\sigma}^2+b\lambda_t^2}{\widehat{\sigma}^2+\lambda_t^2}.
\label{eq:mu_def}
\end{equation}
Note that $\mu^t(X+\lambda_t Z_1, X+ \widehat{\sigma} Z_2)$ equals
\begin{equation*}
\frac{(X+\lambda_t Z_1) \widehat{\sigma}^2+(X+ \widehat{\sigma}Z_2)\lambda_t^2}{\widehat{\sigma}^2+\lambda_t^2} \overset{d}{=}
X+\sigma_t Z,
\end{equation*}
where $Z$ is standard Gaussian, $\overset{d}{=}$ denotes equality in distribution,
and the variance term, $(\sigma_t)^2$, is
\begin{equation}
(\sigma_t)^2 =
\frac{(\lambda_t \widehat{\sigma}^2)^2+(\widehat{\sigma} \lambda_t^2)^2}{( \widehat{\sigma}^2+\lambda_t^2)^2}
=
\frac{\lambda_t^2\widehat{\sigma}^2}{\widehat{\sigma}^2+\lambda_t^2}.
\label{eq:sigma_def}
\end{equation}

The above provides us with the following simplification of the AMP-SI denoiser \eqref{eq:eta_CAMP} for SI with AWGN,
\begin{equation}
\begin{split}
& \eta_t(a, b) = \mathbb{E}[X | X + \sigma^t Z =  \mu^t(a, b)],
\label{eq:rep2}
\end{split}
\end{equation}
where $\mu^t(a,b)$ and $\sigma^t$ are defined in \eqref{eq:mu_def} and \eqref{eq:sigma_def}.
We note that $\mu^t$ is a function of $(a, b)$, but for brevity we drop this dependence in the following.
Considering~\eqref{eq:SE2} and~\eqref{eq:rep2}, 
\begin{equation}
\eta_{t}(X + \lambda_{t}Z_1, X + \widehat{\sigma} Z_2) =  \mathbb{E}[X | X + \sigma^{t} Z].
\label{eq:eta_t_rep}
\end{equation}
We simplify the SE equations \eqref{eq:SE2} using \eqref{eq:eta_t_rep} and the definition of $\sigma^t$ in \eqref{eq:sigma_def}.  Let $\lambda_0 = \sigma_z^2 + \mathbb{E}[X^2]/\delta$ and for $t \geq 0$,
{\small
\begin{equation}
\hspace*{-.15mm}
\lambda_t^2 = \sigma_z^2 + \frac{1}{\delta}\mathbb{E}\!\left[\!\left(\mathbb{E}\!\left[X\!\left| X
\!+\!
\sqrt{\frac{\lambda_{t-1}^2 \widehat{\sigma}^2}{ \widehat{\sigma}^2 + \lambda_{t-1}^2}} \, Z\right]\right .\!\!-\! X\right)^{\!2}\right]\!.\label{eq:SE3}
\end{equation}
}
\hspace{-4pt}The results in \eqref{eq:rep2} and \eqref{eq:SE3} provide
a simplified way to calculate the conditional denoiser of \eqref{eq:eta_CAMP}
and the SE
{\em when the signal and the SI are related through Gaussian noise}.
Moreover, at the stationary point of \eqref{eq:SE3} we have
{\small
\begin{equation}
\lambda^2 = \sigma_z^2 +  \frac{1}{\delta}\mathbb{E}\! \left[ \!\left( \! \mathbb{E}\!\left[X \left|
X \!+\! \sqrt{\frac{\lambda^2 \widehat{\sigma}^2}{ \widehat{\sigma}^2  \! + \!  \lambda^2}} \, Z\right]\right. \!-\! X \! \right)^{\!2}\right]\!,
\label{eq:SE4}
\end{equation}
}
\hspace{-4pt}where $\lambda^2$ is the scalar channel variance. Comparing~\eqref{eq:AMP_SE} (SE without SI) and \eqref{eq:SE4}, we denote the variance in the conditional expectation
by $\widetilde{\lambda}^2 = \frac{\lambda^2\widehat{\sigma}^2}{\widehat{\sigma}^2 + \lambda^2}$.
Note that $\lambda^2 = \frac{\widetilde{\lambda}^2\widehat{\sigma}^2}{ \widehat{\sigma}^2 - \widetilde{\lambda}^2} \geq 0$,
because
$ \widetilde{\lambda}^2 \leq \widehat{\sigma}^2$, and we can rewrite the above as
\begin{equation}
\widetilde{\lambda}^2  \!=\!
\frac{ (\widehat{\sigma}^2 - \widetilde{\lambda}^2)\sigma_z^2}{ \widehat{\sigma}^2} +
\frac{1}{\frac{\delta  \widehat{\sigma}^2}{\sigma^2 - \widetilde{\lambda}^2}}
\mathbb{E}\!\left[\!\left(\mathbb{E}[X | X + \widetilde{\lambda}  Z] \!-\! X\right)^2\right].
\label{eq:SE5}
\end{equation}

We see that
AMP-SI SE~\eqref{eq:SE2} has fixed points coinciding with the fixed points of
standard AMP SE~\eqref{eq:AMP_SE} with `effective' measurement rate
$\delta_{eff} = \delta\left(\frac{\widehat{\sigma}^2 + \lambda^2}{\widehat{\sigma}^2}\right)$ and
`effective' measurement noise variance $\sigma_{eff}^2 = \left(\frac{\widehat{\sigma}^2}{\widehat{\sigma}^2 + \lambda^2}\right)\sigma_z^2$ where $\widehat{\sigma}^2$ is the noise in the SI and $\lambda^2$ is the stationary point of \eqref{eq:SE3}.
This effective change in $\delta$ and $\sigma^2$
implies that the incorporation of SI with AWGN via the AMP-SI algorithm gives us signal recovery for a standard
(without SI)
linear regression problem \eqref{eq:hdreg} with \emph{more} measurements and/or \emph{reduced} measurement noise variance than our own,
and the effect becomes more pronounced, as the noise variance in the SI, $\widehat{\sigma}^2$, gets small.

The above analysis relies on the fact that for the conditional expectation denoiser in standard
(without SI)
AMP~\eqref{eq:AMP1}-\eqref{eq:AMP2}, the corresponding SE
equation~\eqref{eq:AMP_SE} in its convergent states coincides with Tanaka's fixed point equation~\cite{Tanaka2002}, 
ensuring that if AMP runs until it converges, the result provides the best possible MSE achieved by any algorithm under certain conditions.
(These conditions on $\delta$ and $\epsilon$, while outside the scope of this paper,
ensure that there is a single solution to Tanaka's fixed point equation, since multiple solutions
may create a disparity between the MSE of AMP and the MMSE~\cite{Krzakala2012probabilistic},
implying that AMP-SI might be sub-optimal in such cases.)
Our analysis relies heavily on the Gaussianity of the SI noise and
may not easily be generalized.

\subsection{Theorem \ref{thm:SE} Example}   \label{app:SEtheory}

As an example, we study the simple Gaussian-Gaussian (GG) case.   In the GG model one wants to recover a signal having i.i.d.\ zero-mean Gaussian elements with variance $\sigma_X^2$ with SI equal to the signal plus additive white Gaussian noise (AWGN) with variance $\widehat{\sigma}^2$, meaning $\sigma_{\widetilde{X}}^2 = \sigma_{X}^2 + \widehat{\sigma}^2$.  We will show that for assumptions \textbf{(A4)} and \textbf{(A5)} to be true, we need finite fourth moments $\sigma_X^4$ and $\widehat{\sigma}^4$.

In this case, using \eqref{eq:rep2} from Appendix~\ref{app:asilomar}, the denoiser $\eta_t: \mathbb{R}^2 \rightarrow \mathbb{R}$ is given by

\begin{equation}
\begin{split}
\eta_t(a,b)
&= \frac{ \sigma_X^2 (\sigma_{\widetilde{X}}^2 a +   \lambda_t^2 b )}{\sigma_X^2 ( \sigma_{\widetilde{X}}^2 + \lambda_t^2) +\lambda_t^2 \sigma_{\widetilde{X}}^2 }.
\label{eq:eta_2}
\end{split}
\end{equation}

Now we would like to prove the following assumptions needed for Theorem \ref{thm:SE} to hold:
\textbf{(A4)} For $t\geq 0$, the denoisers $\eta_t: \mathbb{R}^{2} \rightarrow \mathbb{R}$ defined in \eqref{eq:eta_CAMP} are Lipschitz in their first argument.
\textbf{(A5)} For  any $2 \times 2$ covariance matrix $\Sigma$, let $(Z_1, Z'_1) \sim \mathcal{N}(0,\Sigma)$  independent of $(X, \widetilde{X}) \sim f(X, \widetilde{X}).$  Then for any $s, t \geq 0$,
\begin{equation}
\mathbb{E} [X \eta_t(X + Z_1, \widetilde{X})] < \infty,
\label{eq:finiteresult1}
\end{equation}
and
\begin{equation}
\mathbb{E} [\eta_t(X + Z_1, \widetilde{X}) \eta_s(X + Z'_1, \widetilde{X}) ] < \infty.
\label{eq:finiteresult2}
\end{equation}
 
Assumption $\textbf{(A4)}$ is straightforward using \eqref{eq:eta_2}: for fixed SI $ \widetilde{\mathbf{X}}$,
\eqref{eq:eta_2} suggests that for finite $\sigma_X^2$ and $\sigma_{\widetilde{X}}^2$,

\begin{equation*}
\lvert \eta_t(x, b) - \eta_t(y, b)\lvert  \leq \lvert x - y\lvert.
\end{equation*}

Next we consider assumption $\textbf{(A5)}$.  We will show~\eqref{eq:finiteresult1}
and then demonstrating~\eqref{eq:finiteresult2}
follows similarly.  First note

\begin{equation*}
\begin{split}
\mathbb{E}_{Z_1} [ \eta_t(X + Z_1, \widetilde{X})] &=\mathbb{E}_{Z_1} \Big[\frac{ \sigma_X^2 (\sigma_{\widetilde{X}}^2 (X + Z_1) +   \lambda_t^2  \widetilde{X} )}{\sigma_X^2 ( \sigma_{\widetilde{X}}^2 + \lambda_t^2) +\lambda_t^2 \sigma_{\widetilde{X}}^2 } \Big] \\
&= \frac{ \sigma_X^2 (\sigma_{\widetilde{X}}^2 X  +   \lambda_t^2  \widetilde{X} )}{\sigma_X^2 ( \sigma_{\widetilde{X}}^2 + \lambda_t^2) +\lambda_t^2 \sigma_{\widetilde{X}}^2 }.
\end{split}
\end{equation*}

Then using $\mathbb{E}_{X, \widetilde{X}}[X \widetilde{X}] = \mathbb{E}_{X,Z_{2}}[X( X + Z_{2})] = \mathbb{E}_X[X^2]  = \sigma_X^2$, we see
\begin{equation*}
\begin{split}
\mathbb{E}_{X, \widetilde{X}}&[X \, \mathbb{E}_{Z_1} [ \eta_t(X + Z_1, \widetilde{X})]] \\
&= \mathbb{E}_{X, \widetilde{X}} \Big[\frac{X \sigma_X^2 (\sigma_{\widetilde{X}}^2 X +   \lambda_t^2  \widetilde{X} )}{\sigma_X^2 ( \sigma_{\widetilde{X}}^2 + \lambda_t^2) +\lambda_t^2 \sigma_{\widetilde{X}}^2 } \Big] \\
& = \frac{\sigma_X^4 (\sigma_{\widetilde{X}}^2 +   \lambda_t^2  )}{\sigma_X^2 ( \sigma_{\widetilde{X}}^2 + \lambda_t^2) +\lambda_t^2 \sigma_{\widetilde{X}}^2 }.
\end{split}
\end{equation*}

For the above to be finite, we need finite $\sigma_X^4, \lambda_t^2,$ and $\widehat{\sigma}^2$.  For~\eqref{eq:finiteresult2} to hold, we need that $\widehat{\sigma}^4$ is finite.
We have shown that it is easy to demonstrate that the assumptions needed for Theorem \ref{thm:SE} hold in the GG case.

\bibliographystyle{abbrv}
\bibliography{bib}

\begin{thebibliography}{10}

\bibitem{MMSEwiki}
Minimum mean square error.
\newblock
  \url{https://en.wikipedia.org/wiki/Minimum_mean_square_error#cite_note-1},
  Retrieved July 7, 2017.

\bibitem{Arguello2011}
H.~Arguello and G.~Arce.
\newblock Code aperture optimization for spectrally agile compressive imaging.
\newblock {\em J. Opt. Soc. Am.}, 28(11):2400--2413, Nov. 2011.

\bibitem{barbier2017mutual}
J.~Barbier, N.~Macris, M.~Dia, and F.~Krzakala.
\newblock Mutual information and optimality of approximate message-passing in
  random linear estimation.
\newblock {\em arXiv preprint arXiv:1701.05823}, 2017.

\bibitem{barbier2015approximate}
J.~Barbier, C.~Sch{\"u}lke, and F.~Krzakala.
\newblock Approximate message-passing with spatially coupled structured
  operators, with applications to compressed sensing and sparse superposition
  codes.
\newblock {\em J. Stat. Mech-Theory E.}, 2015(5):P05013, May 2015.

\bibitem{BNMRW2017}
D.~Baron, A.~Ma, D.~Needell, C.~Rush, and T.~Woolf.
\newblock Conditional approximate message passing with side information.
\newblock In {\em Proc. Asilomar Conf. Signals, Systems, and Computers},
  Pacific Grove, CA, Nov. 2017.

\bibitem{BayMont11}
M.~Bayati and A.~Montanari.
\newblock The dynamics of message passing on dense graphs, with applications to
  compressed sensing.
\newblock {\em IEEE Trans. Inf. Theory}, 57(2):764--785, Feb. 2011.

\bibitem{berthier2017}
R.~Berthier, A.~Montanari, and P.-M. Nguyen.
\newblock State evolution for approximate message passing with non-separable
  functions.
\newblock {\em arXiv preprint arXiv:1708.03950}, 2017.

\bibitem{Boyd2011}
S.~Boyd, N.~Parikh, E.~Chu, B.~Peleato, and J.~Eckstein.
\newblock Distributed optimization and statistical learning via the alternating
  direction method of multipliers.
\newblock {\em Found. Trends Mach. Learn.}, 3(1):1--122, Jan. 2011.

\bibitem{Caire2004}
G.~Caire, R.~Muller, and T.~Tanaka.
\newblock Iterative multiuser joint decoding: Optimal power allocation and
  low-complexity implementation.
\newblock {\em IEEE Trans. Inf. Theory}, 50(9):1950--1973, Sept. 2004.

\bibitem{Candes2009}
E.~Cand\`es and B.~Recht.
\newblock Exact matrix completion via convex optimization.
\newblock {\em Found. Comput. Math.}, 9:717--772, Dec. 2009.

\bibitem{CandesUES}
E.~Cand\`{e}s and T.~Tao.
\newblock {Near-optimal signal recovery from random projections: Universal
  encoding strategies?}
\newblock {\em IEEE Trans. Inf. Theory}, 52(12):5406--5425, Dec. 2006.

\bibitem{Chen2008}
G.-H. Chen, J.~Tang, and S.~Leng.
\newblock Prior image constrained compressed sensing ({PICCS}): a method to
  accurately reconstruct dynamic {CT} images from highly undersampled
  projection data sets.
\newblock {\em Medical Physics}, 35(2):600--663, Feb. 2008.

\bibitem{Chen2016}
M.~Chen, F.~Renna, and M.~Rodrigues.
\newblock On the design of linear projections for compressive sensing with side
  information.
\newblock In {\em Proc. IEEE Int. Symp. Inf. Theory (ISIT)}, pages 670--674,
  Barcelona, Spain, July 2016.

\bibitem{DonohoBP}
S.~Chen, D.~Donoho, and M.~Saunders.
\newblock Atomic decomposition by basis pursuit.
\newblock {\em SIAM J. Sci. Comp.}, 20(1):33--61, Aug. 1998.

\bibitem{Cover06}
T.~Cover and J.~Thomas.
\newblock {\em Elements of Information Theory}.
\newblock New York, NY, USA: Wiley-Interscience, July 2006.

\bibitem{DonohMM_Message}
D.~Donoho, A.~Maleki, and A.~Montanari.
\newblock Message passing algorithms for compressed sensing.
\newblock {\em Proc. Natl. Acad. Sci.}, 106(45):18914--18919, Nov. 2009.

\bibitem{VAMP_NIPS}
A.~Fletcher, P.~Pandit, S.~Rangan, S.~Sarkar, and P.~Schniter.
\newblock Plug-in estimation in high-dimensional linear inverse problems: A
  rigorous analysis.
\newblock In {\em Workshop Neural Info. Proc. Sys. (NIPS)}, pages 7451--7460,
  2018.

\bibitem{Gallager68}
R.~Gallager.
\newblock {\em Information Theory and Reliable Communications}.
\newblock Wiley, Jan. 1968.

\bibitem{GuoVerdu2005}
D.~Guo and S.~Verd{\'u}.
\newblock Randomly spread {CDMA}: {A}symptotics via statistical physics.
\newblock {\em IEEE Trans. Inf. Theory}, 51(6):1983--2010, June 2005.

\bibitem{Herzat2013}
C.~Herzet, C.~Soussen, J.~Idier, and R.~Gribonval.
\newblock Exact recovery conditions for sparse representation with partial
  support information.
\newblock {\em IEEE Trans. Inf. Theory}, 59(11):7509--7524, Aug. 2013.

\bibitem{kamilov2013sparse}
U.~Kamilov, A.~Bourquard, and M.~Unser.
\newblock Sparse image deconvolution with message passing.
\newblock In {\em Proc. 5th Workshop on Signal Process. with Adaptive Sparse
  Structured Representations (SPARS)}, Feb. 2013.

\bibitem{Kamilov2012}
U.~Kamilov, S.~Rangan, A.~Fletcher, and M.~Unser.
\newblock Approximate message passing with consistent parameter estimation and
  applications to sparse learning.
\newblock In {\em Workshop Neural Info. Proc. Sys. (NIPS)}, pages 2447--2455,
  Dec. 2012.

\bibitem{Krzakala2012probabilistic}
F.~Krzakala, M.~M{\'e}zard, F.~Sausset, Y.~Sun, and L.~Zdeborov{\'a}.
\newblock Probabilistic reconstruction in compressed sensing: {A}lgorithms,
  phase diagrams, and threshold achieving matrices.
\newblock {\em J. Stat. Mech. - Theory E.}, 2012(08):P08009, Aug. 2012.

\bibitem{Lloyd82}
S.~Lloyd.
\newblock Least squares quantization in {PCM}.
\newblock {\em IEEE Trans. Inf. Theory}, 28(2):129--137, Mar. 1982.

\bibitem{VanLuong2017}
H.~V. Luong, J.~Seiler, A.~Kaup, S.~Forchhammer, and N.~Deligiannis.
\newblock Measurement bounds for sparse signal reconstruction with multiple
  side information.
\newblock {\em Arxiv preprint arXiv:1605.03234}, Jan. 2017.

\bibitem{MacCartney_2017}
G.~MacCartney, T.~Rappaport, and S.~Rangan.
\newblock Rapid fading due to human blockage in pedestrian crowds at 5g
  millimeter-wave frequencies.
\newblock {\em GLOBECOM 2017 - 2017 IEEE Global Communications Conference}, Dec
  2017.

\bibitem{maleki2010approximate}
A.~Maleki.
\newblock {\em Approximate message passing algorithms for compressed sensing}.
\newblock Stanford University, Nov. 2010.

\bibitem{manoel2017streaming}
A.~Manoel, F.~Krzakala, E.~W. Tramel, and L.~Zdeborov{\'a}.
\newblock Streaming bayesian inference: theoretical limits and mini-batch
  approximate message-passing.
\newblock In {\em Communication, Control, and Computing (Allerton), 2017 55th
  Annual Allerton Conference on}, pages 1048--1055. IEEE, 2017.

\bibitem{SaabM_weighted}
H.~Mansour and R.~Saab.
\newblock Recovery analysis for weighted $\ell_1$-minimization using the null
  space property.
\newblock {\em Appl. Comput. Harmon. Anal.}, 43(1):23--38, July 2017.

\bibitem{Mota2017}
J.~Mota, N.~Deligiannis, and M.~Rodrigues.
\newblock Compressed sensing with prior information: {S}trategies, geometry,
  and bounds.
\newblock {\em IEEE Trans. Inf. Theory}, 63(7):4472--4496, July 2017.

\bibitem{Mota2016}
J.~Mota, N.~Deligiannis, A.~Sankaranaraynan, V.~Cevher, and M.~Rodrigues.
\newblock Adaptive-rate reconstruction of time-varying signals with application
  in compressive foreground extraction.
\newblock {\em IEEE Trans. Signal Process.}, 64(14):3651--3666, Mar. 2016.

\bibitem{NeedellWeighted16}
D.~Needell, R.~Saab, and T.~Woolf.
\newblock Weighted-minimization for sparse recovery under arbitrary prior
  information.
\newblock {\em Inst. Math. Inf. Infer.}, 6(3):284--309, Jan. 2017.

\bibitem{RanganGAMP2010}
S.~Rangan.
\newblock Generalized approximate message passing for estimation with random
  linear mixing.
\newblock In {\em Proc. IEEE Int. Symp. Inf. Theory (ISIT)}, pages 2168--2172,
  July 2011.

\bibitem{RanganADMMGAMP2015_ISIT}
S.~Rangan, A.~Fletcher, P.~Schniter, and U.~Kamilov.
\newblock Inference for generalized linear models via alternating directions
  and {B}ethe free energy minimization.
\newblock In {\em Proc. Int. Symp. Inf. Theory (ISIT)}, pages 1640--1644, June
  2015.

\bibitem{Rangan2014}
S.~Rangan, P.~Schniter, and A.~Fletcher.
\newblock On the convergence of approximate message passing with arbitrary
  matrices.
\newblock In {\em Proc. IEEE Int. Symp. Inform. Theory (ISIT)}, pages 236--240,
  Feb. 2014.

\bibitem{RanganVAMP}
S.~Rangan, P.~Schniter, and A.~Fletcher.
\newblock Vector approximate message passing.
\newblock In {\em Proc. IEEE Int. Symp. Inf. Theory (ISIT)}, pages 1588--1592,
  July 2017.

\bibitem{reeves2016replica}
G.~Reeves and H.~D. Pfister.
\newblock The replica-symmetric prediction for compressed sensing with gaussian
  matrices is exact.
\newblock In {\em Proc. IEEE Int. Symp. Inform. Theory (ISIT)}, pages 665--669.
  IEEE, 2016.

\bibitem{Renna2016}
F.~Renna, L.~Wang, X.~Yuan, J.~Yang, G.~Reeves, A.~Calderbank, L.Carin, and
  M.~Rodrigues.
\newblock Classification and reconstruction of high-dimensional signals from
  low-dimensional features in the presence of side information.
\newblock {\em IEEE Trans. Inf. Theory}, 62(11):6459--6492, Sept. 2016.

\bibitem{RushV16}
C.~Rush and R.~Venkataramanan.
\newblock Finite sample analysis of approximate message passing.
\newblock {\em IEEE Trans. Inf. Theory}, (forthcoming).
\newblock Available: \url{https://ieeexplore.ieee.org/document/8318695/}.

\bibitem{saleh1987statistical}
A.~Saleh and R.~Valenzuela.
\newblock A statistical model for indoor multipath propagation.
\newblock {\em IEEE J. Select. Areas Commun.}, 5(2):128--137, Feb. 1987.

\bibitem{FadeShadow}
P.~Shankar.
\newblock {\em Fading and Shadowing in Wireless Systems}.
\newblock Springer, 2 edition, 2019.

\bibitem{takhar2006new}
D.~Takhar, J.~Laska, M.~Wakin, M.~Duarte, D.~Baron, S.~Sarvotham, K.~Kelly, and
  R.~Baraniuk.
\newblock A new compressive imaging camera architecture using optical-domain
  compression.
\newblock Feb. 2006.

\bibitem{Tanaka2002}
T.~Tanaka.
\newblock {A statistical-mechanics approach to large-system analysis of CDMA
  multiuser detectors}.
\newblock {\em IEEE Trans. Inf. Theory}, 48(11):2888--2910, Nov. 2002.

\bibitem{LASSO1996}
R.~Tibshirani.
\newblock Regression shrinkage and selection via the {LASSO}.
\newblock {\em J. Royal Stat. Soc. Series B (Methodological)}, 58(1):267--288,
  Jan. 1996.

\bibitem{Vaswani2010}
N.~Vaswani and W.~Lu.
\newblock Modified-{CS}: Modifying compressive sensing problems for partially
  known support.
\newblock {\em IEEE Trans. Signal Process.}, 58(9):4595--4607, May 2010.

\bibitem{wang2015approximate}
X.~Wang and J.~Liang.
\newblock Approximate message passing-based compressed sensing reconstruction
  with generalized elastic net prior.
\newblock {\em Signal Process. Image}, 37:19--33, Sept. 2015.

\bibitem{Weizman2015}
L.~Weizman, Y.~Eldar, and D.~Bashat.
\newblock Compressed sensing for longitudinal {MRI}: An adaptive-weighted
  approach.
\newblock {\em Medical Physics}, 42(9):5195--5208, Nov. 2015.

\bibitem{ZhuBaronMPAMP2016ArXiv}
J.~Zhu, D.~Baron, and A.~Beirami.
\newblock Optimal trade-offs in multi-processor approximate message passing.
\newblock {\em Arxiv preprint arXiv:1601.03790}, Nov. 2016.

\bibitem{DCSAMP}
J.~Ziniel and P.~Schniter.
\newblock Dynamic compressive sensing of time-varying signals via approximate
  message passing.
\newblock {\em IEEE Trans. Signal Process.}, 61(21):5270--5284, Nov. 2013.

\end{thebibliography}

\end{document}